\newcommand{\fref}[1]{Fig.~\ref{#1}}
\newcommand{\tref}[1]{Table~\ref{#1}}
\newcommand{\sref}[1]{Section~\ref{#1}}
\newenvironment{algo}[1][!htbp]
  {
   \begin{algorithm}[#1]%
  }{\end{algorithm}}
\providecommand{\U}[1]{\protect\rule{.1in}{.1in}}
\newtheorem{theorem}{Theorem}
\newtheorem{definition}{Definition}
\newtheorem{lemma}{Lemma}
\newtheorem{proposition}{Proposition}
\newenvironment{proof}[1][Proof]{\textbf{#1.} }{\ \rule{0.5em}{0.5em}}
\begin{document}

\title{Training Optimization for Gate-Model Quantum Neural Networks}
\author{Laszlo Gyongyosi\thanks{School of Electronics and Computer Science, University of Southampton, Southampton SO17 1BJ, U.K., and Department of Networked Systems and Services, Budapest University of Technology and Economics, 1117 Budapest, Hungary, and MTA-BME Information Systems Research Group, Hungarian Academy of Sciences, 1051 Budapest, Hungary.}
\and Sandor Imre\thanks{Department of Networked Systems and Services, Budapest University of Technology and Economics, 1117 Budapest, Hungary.}}

\date{}

\maketitle
\begin{abstract}
Gate-based quantum computations represent an essential to realize near-term quantum computer architectures. A gate-model quantum neural network (QNN) is a QNN implemented on a gate-model quantum computer, realized via a set of unitaries with associated gate parameters. Here, we define a training optimization procedure for gate-model QNNs. By deriving the environmental attributes of the gate-model quantum network, we prove the constraint-based learning models. We show that the optimal learning procedures are different if side information is available in different directions, and if side information is accessible about the previous running sequences of the gate-model QNN. The results are particularly convenient for gate-model quantum computer implementations.
\end{abstract}

\section{Introduction}
\label{sec1}
Gate-based quantum computers represent an implementable way to realize experimental quantum computations on near-term quantum computer architectures \cite{ref1,ref2,ref3,ref4,ref5,ref6,ref7,ref8,pres,harr,a1,refibm,br}. In a gate-model quantum computer, the transformations are realized by quantum gates, such that each quantum gate is represented by a unitary operation \cite{ref9,ref10,ref11,ref12,ref13,ref14,ref15,ref16,ref18,ref20,ref22,ref37,ref38}. An input quantum state is evolved through a sequence of unitary gates and the output state is then assessed by a measurement operator \cite{ref9,ref10,ref11,ref12}. Focusing on gate-model quantum computer architectures is motivated by the successful demonstration of the practical implementations of gate-model quantum computers \cite{ref4,ref5,ref6,ref7,ref8}, and several important developments for near-term gate-model quantum computations are currently in progress. Another important aspect is the application of gate-model quantum computations in the near-term quantum devices of the quantum Internet \cite{ref48,ref49,ref50,addp1,addp2,ref19,ref17,ref21,ref41,ref42,ref43,ref44,ref45,ref46,ref47,ref51,ref52}.   

A quantum neural network (QNN) is formulated by a set of quantum operations and connections between the operations with a particular weight parameter \cite{ref9,ref29,ref30, ref35,ref36,ref37,ref38}. Gate-model QNNs refer to QNNs implemented on gate-model quantum computers \cite{ref9}. As a corollary, gate-model QNNs have a crucial experimental importance since these network structures are realizable on near-term quantum computer architectures. The core of a gate-model QNN is a sequence of unitary operations. A gate-model QNN consists of a set of unitary operations and communication links that are used for the propagation of quantum and classical side information in the network for the related calculations of the learning procedure. The unitary transformations represent quantum gates parameterized by a variable referred to as gate parameter (weight). The inputs of the gate-model QNN structure are a computational basis state and an auxiliary quantum system that serves a readout state in the output measurement phase. Each input state is associated with a particular label. In the modeled learning problem, the training of the gate-model QNN aims to learn the values of the gate parameters associated with the unitaries so that the predicted label is close to a true label value of the input (i.e., the difference between the predicted and true values is minimal). This problem, therefore, formulates an objective function that is subject to minimization. In this setting, the training of the gate-model QNN aims to learn the label of a general quantum state. 

In artificial intelligence, machine learning \cite{ref1,ref2,ref3,ref15,ref20,ref23,ref28,ref29,ref30,ref31,ref32,ref33,ref34} utilizes statistical methods with measured data to achieve a desired value of an objective function associated with a particular problem. A learning machine is an abstract computational model for the learning procedures. A constraint machine is a learning machine that works with constraint, such that the constraints are characterized and defined by the actual environment \cite{ref23}. 

The proposed model of a gate-model quantum neural network assumes that quantum information can only be propagated forward direction from the input to the output, and classical side information is available via classical links. The classical side information is processed further via a post-processing unit after the measurement of the output. In the general gate-model QNN scenario, it is assumed that classical side information can be propagated arbitrarily in the network structure, and there is no available side information about the previous running sequences of the gate-model QNN structure. The situation changes, if side information propagates only backward direction and side information about the previous running sequences of the network is also available. The resulting network model is called gate-model recurrent quantum neural network (RQNN). 

Here, we define a constraint-based training optimization method for gate-model QNNs and RQNNs, and propose the computational models from the attributes of the gate-model quantum network environment. We show that these structural distinctions lead to significantly different computational models and learning optimization. By using the constraint-based computational models of the QNNs, we prove the optimal learning methods for each network---nonrecurrent and recurrent gate-model QNNs---vary. Finally, we characterize optimal learning procedures for each variant of gate-model QNNs.

The novel contributions of our manuscript are as follows.
\begin{itemize}
\item \textit{We study the computational models of nonrecurrent and recurrent gate-model QNNs realized via an arbitrary number of unitaries.}
\item \textit{We define learning methods for nonrecurrent and recurrent gate-model QNNs.}
\item \textit{We prove the optimal learning for nonrecurrent and recurrent gate-model QNNs.}
\end{itemize}
This paper is organized as follows. In \sref{relw}, the related works are summarized. \sref{sec2} defines the system model and the parameterization of the learning optimization problem. \sref{sec4} proves the computational models of gate-model QNNs. \sref{sec5} provides learning optimization results. Finally, \sref{sec6} concludes the paper. Supplemental information is included in the Appendix. 
\section{Related Works}
\label{relw}
\subsection{Gate-Model Quantum Computers}
A theoretical background on the realizations of quantum computations in a gate-model quantum computer environment can be found in \cite{ref10} and \cite{ref11}. For a summary on the related references \cite{ref10,ref11,su,ref12,br,refa6,pres,harr,a1}, we suggest \cite{dense}.

\subsection{Quantum Neural Networks}
In \cite{ref9}, the formalism of a gate-model quantum neural network is defined. The gate-model quantum neural network is a quantum neural network implemented on gate-model quantum computer. A particular problem analyzed by the authors is the classification of classical data sets which consist of bitstrings with binary labels.

In \cite{ref36}, the authors studied the subject of quantum deep learning. As the authors found, the application of quantum computing can reduce the time required to train a deep restricted Boltzmann machine. The work also concluded that quantum computing provides a strong framework for deep learning, and the application of quantum computing can lead to significant performance improvements in comparison to classical computing.

In \cite{ref29}, the authors defined a quantum generalization of feedforward neural networks. In the proposed system model, the classical neurons are generalized to being quantum reversible. As the authors showed, the defined quantum network can be trained efficiently using gradient descent to perform quantum generalizations of classical tasks. 

In \cite{ref30}, the authors defined a model of a quantum neuron to perform machine learning tasks on quantum computers. The authors proposed a small quantum circuit to simulate neurons with threshold activation. As the authors found, the proposed quantum circuit realizes a “quantum neuron”. The authors showed an application of the defined quantum neuron model in feedforward networks. The work concluded that the quantum neuron model can learn a function if trained with superposition of inputs and the corresponding output. The proposed training method also suffices to learn the function on all individual inputs separately. 

In \cite{ref37}, the authors studied the structure of artificial quantum neural network. The work focused on the model of quantum neurons and studied the logical elements and tests of convolutional networks. The authors defined a model of an artificial neural network that uses quantum-mechanical particles as a neuron, and set a Monte-Carlo integration method to simulate the proposed quantum-mechanical system. The work also studied the implementation of logical elements based on introduced quantum particles, and the implementation of a simple convolutional network.

In \cite{ref38}, the authors defined the model of a universal quantum perceptron as efficient unitary approximators. The authors studied the implementation of a quantum perceptron with a sigmoid activation function as a reversible many-body unitary operation. In the proposed system model, the response of the quantum perceptron is parameterized by the potential exerted by other neurons. The authors showed that the proposed quantum neural network model is a universal approximator of continuous functions, with at least the same power as classical neural networks.

\subsection{Quantum Machine Learning}
In \cite{sat}, the authors analyzed a Markov process connected to a classical probabilistic algorithm \cite{sch}. A performance evaluation also has been included in the work to compare the performance of the quantum and classical algorithm.
 
In \cite{ref15}, the authors studied quantum algorithms for supervised and unsupervised machine learning. This particular work focuses on the problem of cluster assignment and cluster finding via quantum algorithms. As a main conclusion of the work, via the utilization of quantum computers and quantum machine learning, an exponential speed-up can be reached over classical algorithms.

In \cite{ref18}, the authors defined a method for the analysis of an unknown quantum state. The authors showed that it is possible to perform “quantum principal component analysis” by creating quantum coherence among different copies, and the relevant attributes can be revealed exponentially faster than it is possible by any existing algorithm.

In \cite{ref13}, the authors studied the application of a quantum support vector machine in Big Data classification. The authors showed that a quantum version of the support vector machine (optimized binary classifier) can be implemented on a quantum computer. As the work concluded, the complexity of the quantum algorithm is only logarithmic in the size of the vectors and the number of training examples that provides a significant advantage over classical support machines.

In \cite{ref16}, the problem of quantum-based analysis of big data sets is studied by the authors. As the authors concluded, the proposed quantum algorithms provide an exponential speedup over classical algorithms for topological data analysis.

The problem of quantum generative adversarial learning is studied in \cite{ref35}. In generative adversarial networks a generator entity creates statistics for data that mimics those of a valid data set, and a discriminator unit distinguishes between the valid and non-valid data. As a main conclusion of the work, a quantum computer allows us to realize quantum adversarial networks with an exponential advantage over classical adversarial networks.

In \cite{ref31}, super-polynomial and exponential improvements for quantum-enhanced reinforcement learning are studied. 

In \cite{ref32}, the authors proposed strategies for quantum computing molecular energies using the unitary coupled cluster ansatz. 

The authors of \cite{ref33} provided demonstrations of quantum advantage in machine learning problems.

In \cite{ref34}, the authors study the subject of quantum speedup in machine learning. As a particular problem, the work focuses on finding Boolean functions for classification tasks.
\section{System Model}
\label{sec2}
\subsection{Gate-Model Quantum Neural Network}
\begin{definition}
A ${\rm QNN}_{QG} $ is a quantum neural network (${\rm QNN}$) implemented on a gate-model quantum computer with a quantum gate structure $QG$. It contains quantum links between the unitaries and classical links for the propagation of classical side information. In a ${\rm QNN}_{QG} $, all quantum information propagates forward from the input to the output, while classical side information can propagate arbitrarily (forward and backward) in the network. In a ${\rm QNN}_{QG} $, there is no available side information about the previous running sequences of the structure.
\end{definition}
Using the framework of \cite{ref9}, a ${\rm QNN}_{QG} $ is formulated by a collection of $L$ unitary gates, such that an $i$-th, $i=1,\ldots ,L$ unitary gate $U_{i} \left(\theta _{i} \right)$ is
\begin{equation} \label{1)} 
U_{i} \left(\theta _{i} \right)=\exp \left(-i\theta _{i} P\right),                                                               
\end{equation} 
where $P$ is a generalized Pauli operator formulated by a tensor product of Pauli operators $\left\{X,Y,Z\right\}$, while $\theta _{i} $ is referred to as the gate parameter associated with $U_{i} \left(\theta _{i} \right)$.

In ${\rm QNN}_{QG} $, a given unitary gate $U_{i} \left(\theta _{i} \right)$ sequentially acts on the output of the previous unitary gate $U_{i-1} \left(\theta _{i-1} \right)$, without any nonlinearities \cite{ref9}. The classical side information of ${\rm QNN}_{QG} $ is used in calculations related to error derivation and gradient computations, such that side information can propagate arbitrarily in the network structure.

The sequential application of the $L$ unitaries formulates a unitary operator $U(\vec{\theta })$ as
\begin{equation} \label{ZEqnNum335090} 
U(\vec{\theta })=U_{L} \left(\theta _{L} \right)U_{L-1} \left(\theta _{L-1} \right)\ldots U_{1} \left(\theta _{1} \right),                                                
\end{equation} 
where $U_{i} \left(\theta _{i} \right)$ identifies an $i$-th unitary gate, and $\vec{\theta }$ is the gate parameter vector
\begin{equation} \label{ZEqnNum837426} 
\vec{\theta }=\left(\theta _{1} ,\ldots ,\theta _{L-1} ,\theta _{L} \right)^{T}.
\end{equation} 
At \eqref{ZEqnNum335090}, the evolution of the system of ${\rm QNN}_{QG} $ for a particular input system ${\left| \psi ,\varphi  \right\rangle} $ is
\begin{equation} \label{4)} 
{\left| Y \right\rangle} =U(\vec{\theta }){\left| \psi  \right\rangle} {\left| \varphi  \right\rangle} =U(\vec{\theta }){\left| z \right\rangle} {\left| 1 \right\rangle} =U(\vec{\theta }){\left| z,1 \right\rangle} ,                                            
\end{equation} 
where ${\left| Y \right\rangle} $ is the $\left(n+1\right)$-length output quantum system, and ${\left| \psi  \right\rangle} ={\left| z \right\rangle} $ is a computational basis state, where $z$ is an $n$-length string  
\begin{equation} \label{5)} 
z=z_{1} z_{2} \ldots z_{n} , 
\end{equation} 
where each $z_{i} $ represents a classical bit with values 
\begin{equation} \label{6)} 
z_{i} \in \left\{-1,1\right\},                                                                 
\end{equation} 
while the $\left(n+1\right)$-th quantum state is initialized as 
\begin{equation} \label{ZEqnNum360835} 
{\left| \varphi  \right\rangle} ={\left| 1 \right\rangle} ,                                                                    
\end{equation} 
and is referred to as the readout quantum state.  

\subsection{Objective Function}

The $f(\vec{\theta })$ objective function subject to minimization is defined for a ${\rm QNN}_{QG} $ as
\begin{equation} \label{ZEqnNum830871} 
f(\vec{\theta })={\langle \vec{\theta } |{\rm {\mathcal L}}(x_{0} ,\tilde{l}(z))|\vec{\theta } \rangle},  
\end{equation} 
where ${\rm {\mathcal L}}(x_{0} ,\tilde{l}(z))$ is the loss function \cite{ref9}, defined as
\begin{equation} \label{ZEqnNum125883} 
{\rm {\mathcal L}}(x_{0} ,\tilde{l}(z))=1-l\left(z\right)\tilde{l}\left(z\right),                                                    
\end{equation} 
where $\tilde{l}\left(z\right)$ is the predicted value of the binary label
\begin{equation}
l\left(z\right)\in \left\{-1,1\right\}
\end{equation}
of the string $z$, defined as \cite{ref9}
\begin{equation} \label{ZEqnNum174011} 
\tilde{l}(z)={\langle z,1 \mathrel{| \vphantom{z,1 (U(\vec{\theta }))^{\dag } Y_{n+1} U(\vec{\theta })|z,1}\kern-\nulldelimiterspace} (U(\vec{\theta }))^{\dag } Y_{n+1} U(\vec{\theta })|z,1 \rangle} ,                                           
\end{equation} 
where $Y_{n+1} \in \left\{-1,1\right\}$ is a measured Pauli operator on the readout quantum state \eqref{ZEqnNum360835}, while $x_{0} $ is as 
\begin{equation} \label{ZEqnNum338221} 
x_{0} ={\left| z,1 \right\rangle }.
\end{equation} 

The $\tilde{l}$ predicted value in \eqref{ZEqnNum174011} is a real number between $-1$ and $1$, while the label $l\left(z\right)$ and $Y_{n+1} $ are real numbers $-1$ or $1$. Precisely, the $\tilde{l}$ predicted value as given in \eqref{ZEqnNum174011} represents an average of several measurement outcomes if $Y_{n+1} $ is measured via $R$ output system instances ${\left| Y \right\rangle} ^{\left(r\right)} $-s, $r=1,\ldots ,R$ \cite{ref9}.

The learning problem for a ${\rm QNN}_{QG} $ is, therefore, as follows. At an ${\rm {\mathcal S}}_{T} $ training set formulated via $R$ input strings and labels 
\begin{equation} \label{11)} 
{\rm {\mathcal S}}_{T} =\left\{z^{(r)},l\left(z^{(r)} \right),r=1,\ldots ,R\right\} ,                                                   
\end{equation}
where $r$ refers to the $r$-th measurement round and $R$ is the total number of measurement rounds, the goal is therefore to find the gate parameters \eqref{ZEqnNum837426} of the $L$ unitaries of ${\rm QNN}_{QG} $, such that $f(\vec{\theta })$ in \eqref{ZEqnNum830871} is minimal.

\subsection{Recurrent Gate-Model Quantum Neural Network}
\begin{definition}
An ${\rm RQNN}_{QG} $ is a ${\rm QNN}$ implemented on a gate-model quantum computer with a quantum gate structure $QG$, such that the connections of ${\rm RQNN}_{QG} $ form a directed graph along a sequence. It contains quantum links between the unitaries and classical links for the propagation of classical side information. In an ${\rm RQNN}_{QG} $, all quantum information propagates forward, while classical side information can propagate only backward direction. In an ${\rm RQNN}_{QG} $, side information is available about the previous running sequences of the structure.
\end{definition}
The classical side information of ${\rm RQNN}_{QG} $ is used in error derivation and gradient computations, such that side information can propagate only in backward directions. Similar to the ${\rm QNN}_{QG} $ case, in an ${\rm RQNN}_{QG} $, a given $i$-th unitary $U_{i} \left(\theta _{i} \right)$ acts on the output of the previous unitary $U_{i-1} \left(\theta _{i-1} \right)$. Thus, the quantum evolution of the ${\rm RQNN}_{QG} $ contains no nonlinearities \cite{ref9}. As follows, for an ${\rm RQNN}_{QG} $ network, the objective function can be similarly defined as given in \eqref{ZEqnNum830871}. On the other hand, the structural differences between ${\rm QNN}_{QG} $ and ${\rm RQNN}_{QG} $ allows the characterization of different computational models for the description of the learning problem. The structural differences also lead to various optimal learning methods for the ${\rm QNN}_{QG} $ and ${\rm RQNN}_{QG} $ structures as it will be revealed in \sref{sec4} and \sref{sec5}.
\subsection{Comparative Representation}
For a simple graphical representation, the schematic models of a ${\rm QNN}_{QG} $ and ${\rm RQNN}_{QG} $ for an $\left( r-1 \right)$-th and $r$-th measurement rounds are compared in \fref{figA1}. The $(n+1)$-length input systems are depicted by $\left| {{\psi }_{r-1}} \right\rangle \left| 1 \right\rangle $ and $\left| {{\psi }_{r}} \right\rangle \left| 1 \right\rangle $, while the output systems are denoted by $\left| {{Y}_{r-1}} \right\rangle $ and $\left| {{Y}_{r}} \right\rangle $. The result of the $M$ measurement operator in the $\left( r-1 \right)$-th and $r$-th measurement rounds are denoted by $Y_{n+1}^{\left( r-1 \right)}$ and $Y_{n+1}^{\left( r\right)}$.
 In \fref{figA1}(a), structure of a ${\rm QNN}_{QG} $ is depicted for an $\left( r-1 \right)$-th and $r$-th measurement round. In \fref{figA1}(b), the structure of a ${\rm RQNN}_{QG} $ is illustrated. In a ${\rm QNN}_{QG} $, side information is not available about the previous, $\left( r-1 \right)$-th measurement round in a particular $r$-th  measurement round. For an ${\rm RQNN}_{QG} $, side information is available about the $\left( r-1 \right)$-th measurement round (depicted by the dashed gray arrows) in a particular $r$-th measurement round. The side information in the ${\rm RQNN}_{QG} $ setting refer to information about the gate-parameters and the measurement results of the $\left( r-1 \right)$-th measurement round. 

 \begin{center}
\begin{figure*}[h!]
\begin{center}
\includegraphics[angle = 0,width=1\linewidth]{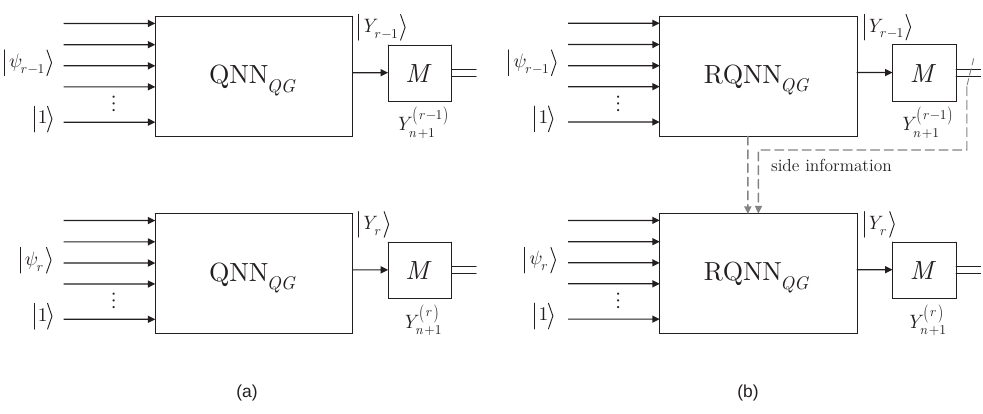}
\caption{Schematic representation of a ${\rm QNN}_{QG} $ and ${\rm RQNN}_{QG}$ in an $(r-1)$-th and $r$-th measurement rounds. The $(n+1)$-length input systems of the $\left( r-1 \right)$ and $r$-th measurement rounds are depicted by $\left| {{\psi }_{r-1}} \right\rangle \left| 1 \right\rangle $ and $\left| {{\psi }_{r}} \right\rangle \left| 1 \right\rangle $, the output systems are $\left| {{Y}_{r-1}} \right\rangle $ and $\left| {{Y}_{r}} \right\rangle $. The result of the $M$ measurement operator in the $\left( r-1 \right)$-th and $r$-th measurement rounds are denoted by $Y_{n+1}^{\left( r-1 \right)}$ and $Y_{n+1}^{\left( r\right)}$. (a): In a ${\rm QNN}_{QG} $, for an $r$-th measurement round, side information is not available about the previous $\left( r-1 \right)$-th measurement round. (b): In a ${\rm RQNN}_{QG} $, side information is available about the previous $\left( r-1 \right)$-th measurement round in an $r$-th round.} 
 \label{figA1}
 \end{center}
\end{figure*}
\end{center}

\subsection{Parameterization}
\label{sec3}
\subsubsection{Constraint Machines}
The tasks of machine learning can be modeled via its mathematical framework and the constraints of the environment \cite{ref1, ref2, ref3}. A ${\rm {\mathcal C}}$ constraint machine is a learning machine working with constraints \cite{ref23}. A constraint machine can be formulated by a particular function $f$ or via some elements of a functional space ${\rm {\mathcal F}}$. The constraints model the attributes of the environment of ${\rm {\mathcal C}}$. 

The learning problem of a ${\rm {\mathcal C}}$ constraint machine can be represented via a ${\rm {\mathcal G}}=\left(V,S\right)$ environmental graph \cite{ref23,ref24,ref25,ref26,ref27}. The ${\rm {\mathcal G}}$ environmental graph is a directed acyclic graph (DAG), with a set $V$ of vertexes and a set $S$ of arcs. The vertexes of ${\rm {\mathcal G}}$ model associated features, while the arcs between the vertexes describe the relations of the vertexes. 

The ${\rm {\mathcal G}}$ environmental graph formalizes factual knowledge via modeling the relations among the elements of the environment \cite{ref23}. In the environmental graph representation, the ${\rm {\mathcal C}}$ constraint machine has to decide based on the information associated with the vertexes of the graph.

For any vertex $v$ of $V$, a perceptual space element $x$, and its identifier $\left\langle x\right\rangle $ that addresses $x$ in the computational model can be defined as a pair
\begin{equation} \label{12)} 
\left(\left\langle x\right\rangle ,x\right),                                                                     
\end{equation} 
where $x\in {\rm {\mathcal X}}$ is an element (vector) of the perceptual space ${\rm {\mathcal X}}\subset {\rm {\mathbb{C}}}^{d} $. Assuming that features are missing, the $\diamondsuit $ symbol can be used. Therefore, ${\rm {\mathcal X}}$ is initialized as ${\rm {\mathcal X}}_{0}$,
\begin{equation} \label{xnull} 
{\rm {\mathcal X}}_{0} ={\rm {\mathcal X}}\bigcup \left\{\diamondsuit \right\}.                                                                 
\end{equation} 
The environment is populated by individuals, and the ${\rm {\mathcal I}}$ individual space is defined via $V$ and ${\rm {\mathcal X}}_{0} $ as 
\begin{equation} \label{14)} 
{\rm {\mathcal I}}=V\times {\rm {\mathcal X}}_{0} ,                                                                 
\end{equation} 
such that the existing features are associated with a subset $\tilde{V}$ of $V$.

The features can be associated with the $\left\langle x\right\rangle $ identifier via a $f_{{\rm {\mathcal P}}} $ perceptual map as
\begin{equation} \label{15)} 
f_{{\rm {\mathcal P}}} :\tilde{V}\to {\rm {\mathcal X}:\; }:x=f_{{\rm {\mathcal P}}} \left(v\right).                                                       
\end{equation} 
If the condition 
\begin{equation} \label{16)} 
\forall v\in (V\backslash \tilde{V}):x=f_{{\rm {\mathcal P}}} (v)=\diamondsuit  
\end{equation} 
holds, then $f_{{\rm {\mathcal P}}} $ is yielded as
\begin{equation} \label{17)} 
f_{{\rm {\mathcal P}}} :V\to {\rm {\mathcal X}:\; }:x=f_{{\rm {\mathcal P}}} \left(v\right).                                                      
\end{equation} 
A given individual $\iota \in {\rm {\mathcal I}}$ is defined as a feature vector $x\in {\rm {\mathcal X}}$. An $\iota \in {\rm {\mathcal I}}$ individual of the individual space ${\rm {\mathcal I}}$ is defined as
\begin{equation} \label{ZEqnNum107192} 
\iota =\Upsilon x+\neg \Upsilon v,                                                                  
\end{equation} 
where $+$ is the sum operator in ${\rm {\mathbb{C}}}^{d} $, $\neg $ is the negation operator, while $\Upsilon $ is a constraint as
\begin{equation} \label{19)} 
\Upsilon :(v\in \tilde{V})\vee \left(x\in {\rm {\mathcal X}}\backslash \tilde{{\rm {\mathcal X}}_{0}}  \right).                                                     
\end{equation} 
where ${\rm {\mathcal X}}_{0}$ is given in \eqref{xnull}.
Thus, from \eqref{ZEqnNum107192}, an individual $\iota $ is a feature vector $x$ of ${\rm {\mathcal X}}$ or a vertex $v$ of ${\rm {\mathcal G}}$. 

Let $\iota ^{*} \in {\rm {\mathcal I}}$ be a specific individual, and let $f$ be an agent represented by the function $f:{\rm {\mathcal I}}\to {\rm {\mathbb{C}}}^{n} $. Then, at a given environmental graph ${\rm {\mathcal G}}$, the ${\rm {\mathcal C}}$ constraint machine is defined via function $f$ as a machine in which the learning and inference are represented via enforcing procedures on constraints $C_{\iota ^{*} } $ and $C_{\iota } $, such that for a ${\rm {\mathcal C}}$ constraint machine the learning procedure requires the satisfaction of the constraints over all ${\rm {\mathcal I}}^{*} $, while in the inference the satisfaction of the constraint is enforced over the given $\iota ^{*} \in {{{\mathcal I}}} $ \cite{ref23}, by theory. Thus, ${\rm {\mathcal C}}$ is defined in a formalized manner, as 
\begin{equation} \label{ZEqnNum864288} 
{\rm {\mathcal C}}\equiv \left\{\begin{array}{l} {C_{\iota } :\forall \iota \in {\tilde{{\mathcal I}}} :\chi \left(v,f\left(\iota \right)\right)=0,} \\ {C_{\iota ^{*} } :\iota ^{*} \in {\rm {\mathcal I}\backslash \tilde{{\mathcal I}}} :\chi \left(v^{*} ,f^{*} \left(\iota ^{*} \right)\right)=0,} \end{array}\right. 
\end{equation} 
where $\tilde{{\mathcal I}}$ is a subset of ${\mathcal I}$, $\iota ^{*} $ refers to a specific individual, vertex or function, $\chi \left(\cdot \right)$ is a compact constraint function, while $v^{*} $ and $f^{*} \left(\iota ^{*} \right)$ refer to the vertex and function at $\iota ^{*} $, respectively.

\subsubsection{Calculus of Variations}
Some elements from the calculus of variations \cite{ref39,ref40} are utilized in the learning optimization procedure.

\paragraph{Euler-Lagrange Equations}
The Euler-Lagrange equations are second-order partial differential equations with solution functions. These equations are useful in optimization problems since they have a differentiable functional that is stationary at the local maxima and minima \cite{ref39}. As a corollary, they can be also used in the problems of machine learning.

\paragraph{Hessian Matrix}
A Hessian matrix $\mathbf{H}$ is a square matrix of second-order partial derivatives of a scalar-valued function, or scalar field \cite{ref39}. In theory, it describes the local curvature of a function of many variables. In a machine-learning setting, it is a useful tool to derive some attributes and critical points of loss functions.

\section{Constraint-based Computational Model}
\label{sec4}
In this section, we derive the computational models of the ${\rm QNN}_{QG} $ and ${\rm RQNN}_{QG} $ structures.

\subsection{Environmental Graph of a Gate-Model Quantum Neural Network}

\begin{proposition}
The ${\rm {\mathcal G}}_{{\rm QNN}_{QG} } =\left(V,S\right)$ environmental graph of a ${\rm QNN}_{QG} $ is a DAG, where $V$ is a set of vertexes, in our setting defined as
\begin{equation} \label{21)} 
V={\rm {\mathcal S}}_{in} \bigcup {\rm {\mathcal U}}\bigcup {\rm {\mathcal Y}},                                                            
\end{equation} 
where ${\rm {\mathcal S}}_{in} $ is the input space, ${\rm {\mathcal U}}$ is the space of unitaries, ${\rm {\mathcal Y}}$ is the output space, and $S$ is a set of arcs.
\end{proposition}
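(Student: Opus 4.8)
The plan is to construct $\mathcal{G}_{{\rm QNN}_{QG}}=\left(V,S\right)$ explicitly from the system model of \sref{sec2} and then verify the two defining features of a directed acyclic graph: that $S$ consists of directed arcs, and that $\mathcal{G}_{{\rm QNN}_{QG}}$ contains no directed cycle. First I would enumerate the feature-carrying objects of a ${\rm QNN}_{QG}$. By the evolution \eqref{4)} with the input \eqref{ZEqnNum338221}, the only objects that carry features are: the input system $x_{0}={\left| z,1 \right\rangle}$, whose features are the bits $z_{i}$ of \eqref{5)} together with the fixed readout initialization \eqref{ZEqnNum360835}; the $L$ unitary gates of \eqref{ZEqnNum335090}, where $U_{i}\left(\theta _{i}\right)$ carries the gate parameter $\theta _{i}$; and the output system ${\left| Y \right\rangle}$ together with the measured Pauli $Y_{n+1}$ whose expectation yields the predicted label \eqref{ZEqnNum174011}. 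Grouping these into the input space $\mathcal{S}_{in}$, the space of unitaries $\mathcal{U}=\left\{U_{1}\left(\theta _{1}\right),\ldots ,U_{L}\left(\theta _{L}\right)\right\}$, and the output space $\mathcal{Y}$ gives the vertex set $V=\mathcal{S}_{in}\cup \mathcal{U}\cup \mathcal{Y}$ of \eqref{21)}; each vertex is a pair $\left(\left\langle x\right\rangle ,x\right)$ in the sense of \eqref{12)}, with $x$ the associated feature (a string bit, a gate parameter, or a measured readout value).

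Next I would fix the arc set $S$ from the ``acts on the output of'' relation that governs the evolution \eqref{ZEqnNum335090}--\eqref{4)}. Because in a ${\rm QNN}_{QG}$ the quantum information propagates only forward and the evolution contains no nonlinearities, the induced feature dependencies are exactly: an arc from the input vertex to $U_{1}\left(\theta _{1}\right)$, since $U_{1}$ acts on ${\left| z,1 \right\rangle}$; an arc from $U_{i-1}\left(\theta _{i-1}\right)$ to $U_{i}\left(\theta _{i}\right)$ for each $i=2,\ldots ,L$, since $U_{i}$ acts on the output of $U_{i-1}$; and an arc from $U_{L}\left(\theta _{L}\right)$ to the output vertex, since ${\left| Y \right\rangle}$ and the subsequent measurement of $Y_{n+1}$ depend on $U_{L}\cdots U_{1}{\left| z,1 \right\rangle}$. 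The classical side information of Definition~1 is consumed only in the post-measurement error and gradient calculations and, by hypothesis, carries no information about earlier measurement rounds; hence it attaches downstream of the output vertex and, at the level of feature dependencies, adds no arc that reverses the orientation above.

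I would then establish acyclicity by exhibiting a topological order of $V$: list the vertices of $\mathcal{S}_{in}$, then $U_{1}\left(\theta _{1}\right),U_{2}\left(\theta _{2}\right),\ldots ,U_{L}\left(\theta _{L}\right)$ in the order fixed by \eqref{ZEqnNum335090}, then the vertices of $\mathcal{Y}$. By the construction of $S$ every arc runs from an earlier to a strictly later vertex in this order, so no directed cycle is possible; combined with the fact that $S$ is a set of ordered pairs, this shows that $\mathcal{G}_{{\rm QNN}_{QG}}$ is a DAG with vertex set \eqref{21)}, which is the assertion. One may additionally remark that this is consistent with the constraint-machine framework of \sref{sec3}, in which the environmental graph is required to be a DAG and the agent $f$ maps an individual (an input feature vector) forward to its output.

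I expect the main obstacle to be the treatment of the classical side information. Definition~1 permits it to propagate ``arbitrarily (forward and backward)'', which at first sight threatens acyclicity, so the delicate point is the modeling claim that the \emph{environmental} graph records the feature and constraint dependencies of the learning map --- generated solely by the forward quantum chain \eqref{ZEqnNum335090} --- rather than the physical communication topology; equivalently, the backward classical links carry only already-computed quantities (gate parameters and measurement outcomes) toward a post-processing vertex and can always be oriented consistently with the topological order, so they create no cycle. This is precisely the structural feature that separates a ${\rm QNN}_{QG}$ from an ${\rm RQNN}_{QG}$ of Definition~2, where backward side information together with cross-round memory will force a different computational model. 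Once this modeling choice is made explicit, the remaining verification is routine.
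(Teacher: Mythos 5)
Your construction matches the paper's own treatment: the paper likewise builds $\mathcal{G}_{{\rm QNN}_{QG}}$ by assigning a vertex $v_{U_i}$ to each unitary (and $v_0$ to the input $\left| z,1 \right\rangle$), connecting them by forward directed arcs $s_{ij}$ carrying the gate parameters $\theta_{ij}=\theta_j$, and invoking the topological ordering function $f_{\angle}$ on the resulting chain, which is exactly your topological-order argument for acyclicity. Your explicit handling of the classical side information (that it records only already-computed quantities and so adds no cycle-creating feature dependency) is a point the paper leaves implicit, but it is consistent with the paper's figures and does not change the approach.
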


Let ${\rm {\mathcal G}}_{{\rm QNN}_{QG} } $ be an environmental graph of ${\rm QNN}_{QG} $, and let $v_{U_{i}} $ be a vertex, such that $v_{U_{i}} \in V$ is related to the unitary $U_{i} \left(\theta _{i} \right)$, where index $i=0$ is associated with the ${\left| z,1 \right\rangle} $ input system with vertex $v_{0} $. Then, let $v_{U_{i}} $ and $v_{U_{j}} $ be connected vertices via directed arc $s_{ij} $, $s_{ij} \in S$, such that a particular $\theta _{ij} $ gate parameter is associated with the forward directed arc\footnote{The notation $U_{j}(\theta _{ij})$ refers to the selection of $\theta _{j}$ for the unitary $U_j$ to realize the operation $U_i\left( {{\theta }_{i}} \right)U_j\left( {{\theta }_{j}} \right)$, i.e., the application of $U_j\left( {{\theta }_{j}} \right)$ on the output of $U_i\left( {{\theta }_{i}} \right)$ at a particular gate parameter $\theta _{j}$.}, as
\begin{equation} \label{22)} 
\theta _{ij} =\theta _{j} ,                                                                        
\end{equation} 
such that arc $s_{0j} $ is associated with $\theta _{0j} =\theta _{j} $.

Then a given state $x_{U_{i} \left(\theta _{i} \right)} $ of ${\rm {\mathcal X}}$ associated with $U_{i} \left(\theta _{i} \right)$ is defined as
\begin{equation} \label{23)} 
x_{U_{i} \left(\theta _{i} \right)} =v_{U_{i}} +a_{U_{i} \left(\theta _{i} \right)} , 
\end{equation}
where $v_{U_{i}} $ is a label for unitary $U_{i}$ in the environmental graph ${\rm {\mathcal G}}_{{\rm QNN}_{QG} } $ (serves as an identifier in the computational structure of \eqref{23)}), while parameter $a_{U_{i} \left(\theta _{i} \right)} $ is defined for a $U_{i} \left(\theta _{i} \right)$ as
\begin{equation} \label{25)} 
a_{U_{i} \left(\theta _{i} \right)} =\sum _{h\in \Xi \left(i\right)}U_{i} \left(\theta _{hi} \right)x_{U_{h} \left(\theta _{h} \right)} +b_{U_{i} \left(\theta _{i} \right)}  , 
\end{equation} 
where $\Xi \left(i\right)$ refers to the parent set of $v_{U_{i}} $,  $U_{i}(\theta _{hi})$ refers to the selection of $\theta _{i}$ for unitary $U_i$ for a particular input from $U_h(\theta _{h})$, while $b_{U_{i} \left(\theta _{i} \right)} $ is the bias relative to $v_{U_{i}} $.

Applying a $f_{\angle }$ topological ordering function on ${\rm {\mathcal G}}_{{\rm QNN}_{QG} } $ yields an ordered graph structure $f_{\angle } ({\rm {\mathcal G}}_{{\rm QNN}_{QG} })$ of the $L$ unitaries. Thus, a given output ${\left| Y \right\rangle} $ of ${\rm QNN}_{QG} $ can be rewritten in a compact form as
\begin{equation} \label{26)} 
{\left| Y \right\rangle} =U(\vec{\theta })x_{0} =\left(U_{L} \left(\theta _{L} \right)U_{L-1} \left(\theta _{L-1} \right)\ldots U_{1} \left(\theta _{1} \right)x_{0} \right),                                  
\end{equation} 
where the term $x_{0} \in {\rm {\mathcal S}}_{in} $ is associated with the input system as defined in \eqref{ZEqnNum338221}.

A particular state $x_{U_{l} \left(\theta _{l} \right)} $, $l=1,\ldots ,L$ is evaluated in function of $x_{U_{l-1} \left(\theta _{l-1} \right)} $ as
\begin{equation} \label{27)} 
x_{U_{l} \left(\theta _{l} \right)} =U_{l} \left(\theta _{l} \right)x_{U_{l-1} \left(\theta _{l-1} \right)} .                                                              
\end{equation} 

The environmental and ordered graphs of a gate-model quantum neural network are illustrated in \fref{fig1}. In \fref{fig1}(a) the ${\rm {\mathcal G}}_{{\rm QNN}_{QG} } $ environmental graph of a ${\rm QNN}_{QG} $ is depicted, and the ordered graph $f_{\angle } ({\rm {\mathcal G}}_{{\rm QNN}_{QG} })$ is shown in \fref{fig1}(b). 

 \begin{center}
\begin{figure*}[h!]
\begin{center}
\includegraphics[angle = 0,width=1\linewidth]{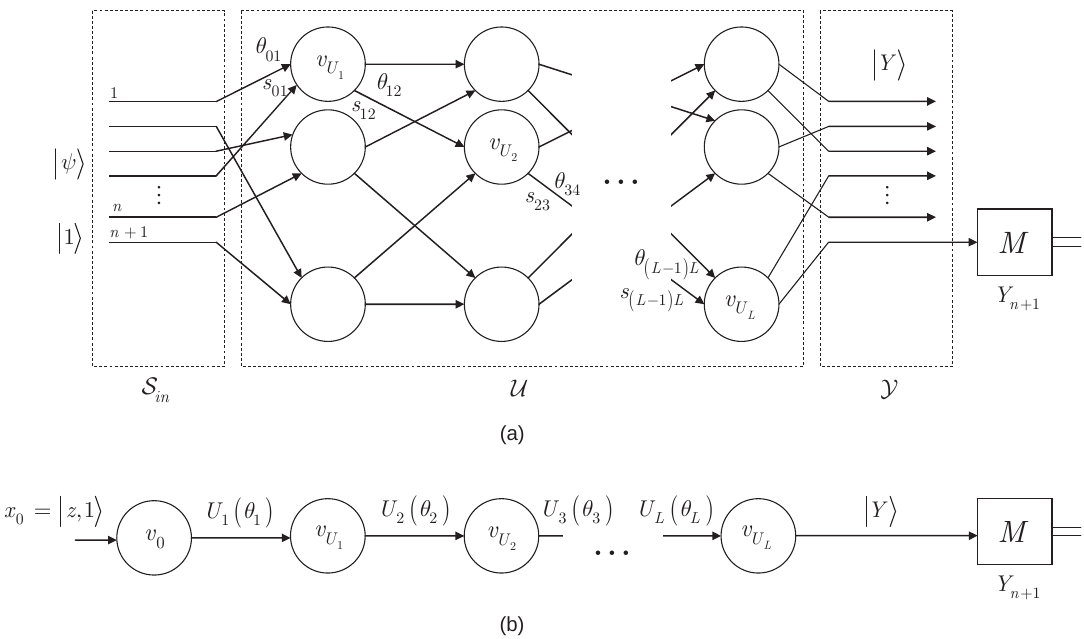}
\caption{(a): The ${\rm {\mathcal G}}_{{\rm QNN}_{QG} } $ environmental graph of a ${\rm QNN}_{QG} $, with $L$ unitaries. The input state of the ${\rm QNN}_{QG} $ is ${\left| \psi ,1 \right\rangle} $. A unitary $U_{i} \left(\theta _{i} \right)$ is represented by a vertex $v_{U_{i}} $ in ${\rm {\mathcal G}}_{{\rm QNN}_{QG} } $. The vertices $v_{U_{i}} $ and $v_{U_{j}} $ of unitaries $U_{i} \left(\theta _{i} \right)$ and $U_{j} \left(\theta _{j} \right)$ are connected by directed arcs $s_{ij} $. The gate parameters $\theta _{ij} =\theta _{j} $ are associated with $s_{ij} $, while ${\rm {\mathcal S}}_{in} $ is the input space, ${\rm {\mathcal U}}$ is the space of $L$ unitaries, and ${\rm {\mathcal Y}}$ is the output space. Operator $M$ is a measurement on the $\left(n+1\right)$-th state (readout quantum state), and $Y_{n+1} $ is a Pauli operator measured on the readout state (classical links are not depicted) (b): The compacted ordered graph $f_{\angle } ({\rm {\mathcal G}}_{{\rm QNN}_{QG} })$ of ${\rm QNN}_{QG} $. The output is ${\left| Y \right\rangle} =U(\vec{\theta })x_{0} $, where $x_{0} ={\left| z,1 \right\rangle} $ and $U(\vec{\theta })=\prod _{l=1}^{L}U_{l} \left(\theta _{l} \right) $(classical links are not depicted).} 
 \label{fig1}
 \end{center}
\end{figure*}
\end{center}

\subsection{Computational Model of Gate-Model Quantum Neural Networks}
\begin{theorem}
The computational model of a ${\rm QNN}_{QG} $ is a ${\rm {\mathcal C}}\left({\rm QNN}_{QG} \right)$ constraint machine with linear transition functions $f_{T} \left({\rm QNN}_{QG} \right)$. 
\end{theorem}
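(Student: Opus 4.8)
The plan is to check the two claims of the theorem in turn: that the data defining a ${\rm QNN}_{QG}$ instantiates the constraint-machine template ${\rm {\mathcal C}}$ of \eqref{ZEqnNum864288}, and that the resulting transition functions are linear in the state variables. First I would take the environmental graph ${\rm {\mathcal G}}_{{\rm QNN}_{QG} }=\left(V,S\right)$ of the preceding Proposition, with $V={\rm {\mathcal S}}_{in}\cup{\rm {\mathcal U}}\cup{\rm {\mathcal Y}}$, and read off the constraint-machine vocabulary on it: each vertex $v_{U_{i}}$ serves as the identifier $\left\langle x\right\rangle$ of the unitary $U_{i}\left(\theta_{i}\right)$, the perceptual-space element attached to it is the propagated state $x_{U_{i}\left(\theta_{i}\right)}$ of \eqref{23)}, the individual $\iota$ is obtained from $v_{U_{i}}$ via $\Upsilon$ and the perceptual map $f_{{\rm {\mathcal P}}}$ as in \eqref{ZEqnNum107192}, and the agent $f:{\rm {\mathcal I}}\to{\rm {\mathbb{C}}}^{n}$ sends that individual to $x_{U_{i}\left(\theta_{i}\right)}$. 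Since ${\rm {\mathcal G}}_{{\rm QNN}_{QG} }$ is a DAG, the topological ordering $f_{\angle}$ used around \eqref{26)} exists, so the constraints can be imposed consistently layer by layer from ${\rm {\mathcal S}}_{in}$ to ${\rm {\mathcal Y}}$.

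Next I would exhibit the compact constraint functions $\chi$. For an internal vertex $v_{U_{i}}$ the defining relations \eqref{23)} and \eqref{25)} are rewritten as the single vanishing condition
\begin{equation}
\chi\left(v_{U_{i}},f\left(\iota\right)\right)=x_{U_{i}\left(\theta_{i}\right)}-v_{U_{i}}-\sum_{h\in\Xi\left(i\right)}U_{i}\left(\theta_{hi}\right)x_{U_{h}\left(\theta_{h}\right)}-b_{U_{i}\left(\theta_{i}\right)}=0,
\end{equation}
the input vertex $v_{0}$ carries $x_{0}-{\left| z,1 \right\rangle}=0$ with $x_{0}\in{\rm {\mathcal S}}_{in}$ as in \eqref{ZEqnNum338221}, and the output vertex carries ${\left| Y \right\rangle}-U\left(\vec{\theta}\right)x_{0}=0$. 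Requiring all of these simultaneously over the full training set ${\rm {\mathcal S}}_{T}$ is exactly the family $C_{\iota}$ of \eqref{ZEqnNum864288}, while enforcing them for a single queried input is $C_{\iota^{*}}$; the enforced scalar objective is then $f\left(\vec{\theta}\right)$ of \eqref{ZEqnNum830871} via \eqref{ZEqnNum125883} and \eqref{ZEqnNum174011}. This identifies ${\rm QNN}_{QG}$ with a constraint machine ${\rm {\mathcal C}}\left({\rm QNN}_{QG}\right)$.

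It remains to establish linearity of the transition functions $f_{T}\left({\rm QNN}_{QG}\right)$. On the compacted ordered graph the layer-to-layer transition is $x_{U_{l}\left(\theta_{l}\right)}=U_{l}\left(\theta_{l}\right)x_{U_{l-1}\left(\theta_{l-1}\right)}$ by \eqref{27)}, and for fixed $\theta_{l}$ each $U_{l}\left(\theta_{l}\right)=\exp\left(-i\theta_{l}P\right)$ of \eqref{1)} is a fixed unitary matrix; hence $x_{U_{l-1}\left(\theta_{l-1}\right)}\mapsto x_{U_{l}\left(\theta_{l}\right)}$ is a linear operator, and the end-to-end map $x_{0}\mapsto{\left| Y \right\rangle}=U\left(\vec{\theta}\right)x_{0}$ of \eqref{26)} is linear as a product of linear maps. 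More generally the aggregation \eqref{25)} is affine in the incoming states $\left\{x_{U_{h}\left(\theta_{h}\right)}\right\}_{h\in\Xi\left(i\right)}$, with linear part $\sum_{h}U_{i}\left(\theta_{hi}\right)\left(\cdot\right)$ and constant bias $b_{U_{i}\left(\theta_{i}\right)}$; the structural fact that makes this work is the property recorded after \eqref{ZEqnNum335090}, that a ${\rm QNN}_{QG}$ contains \emph{no nonlinearities} between consecutive unitaries, which is precisely what rules out the nonlinear activation maps of a classical neural network and forces $f_{T}\left({\rm QNN}_{QG}\right)$ to be linear. Assembling the three steps gives the claim.

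I expect the main obstacle to be the bookkeeping of the second step: making the dictionary between the generic constraint-machine objects $\left(\iota,\Upsilon,f_{{\rm {\mathcal P}}},\chi\right)$ and the concrete QNN quantities tight enough that the constraint system \eqref{ZEqnNum864288} is genuinely equivalent to forward propagation through $U\left(\vec{\theta}\right)$ --- in particular, checking that no extra constraint is needed at the measurement/readout vertex, that the arc-to-parameter assignment $\theta_{ij}=\theta_{j}$ is consistent with a single global parameter vector $\vec{\theta}$ of \eqref{ZEqnNum837426}, and that the bias terms $b_{U_{i}\left(\theta_{i}\right)}$ leave the underlying state-to-state maps linear.
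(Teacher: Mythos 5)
Your proposal is correct and reaches the same conclusion, but it instantiates the constraint machine through a different piece of the formalism than the paper does. The paper's proof works entirely at the level of the abstract transition/output function pair $\left(f_{T},F_{O}\right)$ of \eqref{ZEqnNum801212}--\eqref{ZEqnNum904058}: it introduces state variables $\gamma _{v}$, imposes the constraints $\zeta _{v} :{\left| \gamma _{v} \right\rangle} -f_{T} \left({\rm QNN}_{QG} \right)=0$ and $\varphi _{v} :\wp _{v} \circ F_{O} \left({\rm QNN}_{QG} \right)=0$, and aggregates them into the compact constraint $\pi _{v}$ of \eqref{ZEqnNum734924}, concluding that a machine enforcing $\pi _{v}$ is by definition a constraint machine; linearity of $f_{T}$ is then read off from the ``no nonlinearities'' property exactly as you do. You instead bypass the $\left(f_{T},F_{O}\right)$ abstraction and plug the concrete data-flow equations \eqref{23)}, \eqref{25)}, \eqref{27)} directly into the $C_{\iota }/C_{\iota ^{*} }/\chi$ template of \eqref{ZEqnNum864288}, writing an explicit vanishing condition at each vertex. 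Your route is more concrete and makes the dictionary between the QNN quantities and the constraint-machine vocabulary explicit (including the affine-versus-linear subtlety of the bias term, which the paper glosses over); the paper's route buys the $\left(f_{T},F_{O}\right)$ machinery that is reused verbatim in the subsequent diffusion-machine theorem for ${\rm RQNN}_{QG}$, so the two proofs of Theorems 1 and 2 share a common skeleton. The concerns you flag at the end (consistency of $\theta _{ij} =\theta _{j}$ with the global $\vec{\theta }$, and the readout vertex) are handled in the paper simply by the convention $\theta _{ij} =\theta _{j}$ on arcs and by the separate output constraint $\varphi _{v}$, so they do not create a gap in either argument.
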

\begin{proof} 
Let ${\rm {\mathcal G}}\left({\rm QNN}_{QG} \right)=\left(V,S\right)$ be the environmental graph of a ${\rm QNN}_{QG} $, and assume that the number of types of the vertexes is $p$. Then, the vertex set $V$ can be expressed as a collection 
\begin{equation} \label{28)} 
V=\bigcup _{i=1}^{p}V_{i}  , 
\end{equation} 
where $V_{i}$ identifies a set of vertexes, $p$ is the total number of the $V_{i}$ sets, such that $V_{i} \bigcap V_{j}  =\emptyset ,$ if only $i\ne j$ \cite{ref23}.
For a $v \in V_{i}$ vertex from set $V_{i}$, an ${{f}_{T}}:\mathbb{C}^{\dim _{in}} \to \mathbb{C}^{\dim _{out}}$ transition function \cite{ref23} can be defined as
\begin{equation} \label{ZEqnNum801212} 
f_{T} :{\rm {\mathcal Z}}_{{V_{i}}}^{\left|\Gamma \left(v\right)\right|} \times {\rm {\mathcal X}}_{{V_{i}}} \to {\rm {\mathcal Z}}_{{V_{i}}} :\left(\gamma _{\Gamma \left(v\right)} ,x_{v} \right)\to f_{T} \left(\gamma _{\Gamma \left(v\right)} ,x_{v} \right),                            
\end{equation} 
where ${\rm {\mathcal X}}_{{V_{i}}} $ is the perceptual space ${\rm {\mathcal X}}$ of ${V_{i}}$, ${\rm {\mathcal X}}_{{V_{i}}} \subset {\rm {\mathbb{C}}}^{\dim ({\rm {\mathcal X}}_{{V_{i}}}) } $; $\dim ({\rm {\mathcal X}}_{{V_{i}}}) $ is the dimension of the space ${\rm {\mathcal X}}_{{V_{i}}} $; $x_{v} $ is an element of ${\rm {\mathcal X}}_{{V_{i}}} $; $x_{v} \in {\rm {\mathcal X}}_{{V_{i}}} $ associated with a unitary $U_{v} \left(\theta _{v} \right)$; ${\rm {\mathcal Z}}$ is the state space, ${\rm {\mathcal Z}}_{{V_{i}}} $ is the state space of $V_{i}$, ${\rm {\mathcal Z}}_{{V_{i}}} \subset {\rm {\mathbb{C}}}^{\dim ({\rm {\mathcal Z}}_{{V_{i}}})} $, $\dim ({\rm {\mathcal Z}}_{{V_{i}}}) $ is the dimension of the space ${\rm {\mathcal Z}}_{{V_{i}}} $; $\Gamma \left(v\right)$ refers to the children set of $v$; $\left|\Gamma \left(v\right)\right|$ is the cardinality of set $\Gamma \left(v\right)$; $\gamma \in {\mathcal Z} $ is a state variable in the state space ${\rm {\mathcal Z}}$ that serves as side information to process the $v$ vertices of $V$ in ${\rm {\mathcal G}}\left({\rm QNN}_{QG} \right)$, while $\gamma_{\Gamma(v)} \in {\rm {\mathcal Z}}_{V_i}^{|\Gamma(v)|} \subset {\mathbb{C}}^{ |\Gamma(v)| }$ and $\gamma_{\Gamma(v)} = (\gamma_{\Gamma(v), 1}, \cdots , \gamma_{\Gamma(v), |\Gamma(v)|}) $, by theory \cite{ref23,ref27}. Thus, the $f_T$ transition function in \eqref{ZEqnNum801212} is a complex-valued function that maps an input pair $(\gamma , x)$ from the space of $\mathcal{X} \times \mathcal{Z}$ to the state space $\mathcal{Z}$.

Similarly, for any ${V_{i}}$, an $f_{O}:\mathbb{C}^{\dim _{in}} \to \mathbb{C}^{\dim _{out}}$ output function \cite{ref23} can be defined as
\begin{equation} \label{ZEqnNum904058} 
F_{O} :{\rm {\mathcal Z}}_{{V_{i}}} \times {\rm {\mathcal X}}_{{V_{i}}} \to {\rm {\mathcal Y}}_{{V_{i}}}:\left(\gamma _{v} ,x_{v} \right)\to F_{O} \left(\gamma _{v} ,x_{v} \right), 
\end{equation} 
where ${\rm {\mathcal Y}}_{{V_{i}}} $ is the output space ${\rm {\mathcal Y}}$, and $\gamma _{v} $ is a state variable associated with $v$, $\gamma _{v} \in {\rm {\mathcal Z}}_{{V_{i}}} $, such that $\gamma _{v} =\gamma _{0} $ if $\Gamma \left(v\right)=\emptyset $. The $ f_{O}$ output function in \eqref{ZEqnNum904058} is therefore a complex-valued function that maps an input pair $(\gamma , x)$ from the space of $\mathcal{X} \times \mathcal{Z}$ to the output space $\mathcal{Y}$.

From \eqref{ZEqnNum801212} and \eqref{ZEqnNum904058}, it follows that for any ${V_{i}}$, there exists the $\phi \left({V_{i}}\right)$ associated function-pair as
\begin{equation} \label{31)} 
\phi \left({V_{i}}\right)=\left(f_{T} ,F_{O} \right).                                                      
\end{equation} 
Let us specify the generalized functions of \eqref{ZEqnNum801212} and \eqref{ZEqnNum904058} for a ${\rm QNN}_{QG} $.

Let $U(\vec{\theta })$ of ${\rm QNN}_{QG} $ be defined as given in \eqref{ZEqnNum335090}. Since in ${\rm QNN}_{QG} $, a given $i$-th unitary $U_{i} \left(\theta _{i} \right)$ acts on the output of the previous unitary $U_{i-1} \left(\theta _{i-1} \right)$, the network contains no nonlinearities \cite{ref9}. As a corollary, the state transition function $f_{T} \left({\rm QNN}_{QG} \right)$ in \eqref{ZEqnNum801212} is also linear for a ${\rm QNN}_{QG} $.

Let $\left| \gamma _{v}  \right\rangle $ be the quantum state associated with $\gamma _{v}$ state variable of a given $v$. Then, the constraints on the transition function and output function of a ${\rm QNN}_{QG} $ can be evaluated as follows. 

Let $f_{T} \left({\rm QNN}_{QG} \right)$ be the transition function of a ${\rm QNN}_{QG} $ defined for a given $v\in V$ of ${\rm {\mathcal G}}\left({\rm QNN}_{QG} \right)$ via \eqref{ZEqnNum801212} as
\begin{equation} \label{ZEqnNum916033} 
f_{T} \left({\rm QNN}_{QG} \right):\left(\gamma _{\Gamma \left(v\right)} ,x_{v} \right)\to f_{T} \left(\gamma _{\Gamma \left(v\right)} ,x_{v} \right).                                       
\end{equation} 
The $F_{O} \left({\rm QNN}_{QG} \right)$ output function of a ${\rm QNN}_{QG} $ for a given $v$ of ${\rm {\mathcal G}}\left({\rm QNN}_{QG} \right)$ via \eqref{ZEqnNum904058} is
\begin{equation} \label{ZEqnNum988516} 
F_{O} \left({\rm QNN}_{QG} \right):\left(\gamma _{v} ,x_{v} \right)\to F_{O} \left(\gamma _{v} ,x_{v} \right).                                            
\end{equation} 
Since $f_{T} \left({\rm QNN}_{QG} \right)$ in \eqref{ZEqnNum916033} and $F_{O} \left({\rm QNN}_{QG} \right)$ in \eqref{ZEqnNum988516} correspond with the data flow computational scheme of a ${\rm QNN}_{QG} $ with linear transition functions, \eqref{ZEqnNum916033} and \eqref{ZEqnNum988516} represent an expression of the constraints of ${\rm QNN}_{QG} $. These statements can be formulated in a compact form.

Let $\zeta _{v} $ be a constraint on $f_{T} \left({\rm QNN}_{QG} \right)$ of ${\rm QNN}_{QG} $ as 
\begin{equation} \label{ZEqnNum270596} 
\zeta _{v} :{\left| \gamma _{v}  \right\rangle} -f_{T} \left({\rm QNN}_{QG} \right)=0. 
\end{equation} 
Thus, the $f_{T} \left({\rm QNN}_{QG} \right)$ transition function is constrained as
\begin{equation} \label{37)} 
f_{T} \left({\rm QNN}_{QG} \right)={\left| \gamma _{v}  \right\rangle} .                                                            
\end{equation} 
With respect to the output function, let $\varphi _{v} $ be a constraint on $F_{O} \left({\rm QNN}_{QG} \right)$ of ${\rm QNN}_{QG} $ as
\begin{equation} \label{ZEqnNum506759} 
\varphi _{v} :\wp _{v} \circ F_{O} \left({\rm QNN}_{QG} \right)=0,                                                        
\end{equation} 
where $\circ $ is the composition operator, such that $\left(f\circ g\right)\left(x\right)=f\left(g\left(x\right)\right)$, $\wp _{v}$ is therefore another constraint as $\wp _{v}(F_{O} \left({\rm QNN}_{QG} \right))=0$.

Then let $\pi _{v} $ be a compact constraint on $f_{T} \left({\rm QNN}_{QG} \right)$ and $F_{O} \left({\rm QNN}_{QG} \right)$ defined via constraints \eqref{ZEqnNum270596} and \eqref{ZEqnNum506759} as
\begin{equation} \label{ZEqnNum734924} 
\begin{split}
\pi _{v} \left(f_{T} \left({\rm QNN}_{QG} \right),F_{O} \left({\rm QNN}_{QG} \right)\right)\\
=&\sum _{v\in V}(\zeta _{v} +\varphi _{v})-2\left|V\right| \\
=&\sum _{v\in V}\left((\left| \gamma _{v}  \right\rangle-(f_{T} \left({\rm QNN}_{QG} \right)) +\wp _{v}(F_{O} \left({\rm QNN}_{QG} \right)) \right)-2\left|V\right| \\
 \ne & 0.                                   
\end{split}
\end{equation} 
Since it can be verified that a learning machine that enforces the constraint in \eqref{ZEqnNum734924}, is in fact a constraint machine. As a corollary, the constraints \eqref{ZEqnNum916033} and \eqref{ZEqnNum988516}, along with the compact constraint \eqref{ZEqnNum734924}, define a ${\rm {\mathcal C}}\left({\rm QNN}_{QG} \right)$ constraint machine for a ${\rm QNN}_{QG} $ with linear functions $f_{T} \left({\rm QNN}_{QG} \right)$ and $F_{O} \left({\rm QNN}_{QG} \right)$.
\end{proof}
\subsection{Diffusion Machine}
Let ${\rm {\mathcal C}}$ be the constraint machine with linear transition function $f_{T} \left(\gamma _{\Gamma \left(v\right)} ,x_{v} \right)$, and let $\S _{v}$ be a state variable such that $\forall v\in V$
\begin{equation} \label{add1} 
\S _{v}-f_{T} \left(\gamma _{\Gamma \left(v\right)} ,x_{v} \right) =0, 
\end{equation} 
and let $F_{O} \left(\gamma _{v} ,x_{v} \right)$ be the output function of ${\rm {\mathcal C}}$, such that $\forall v\in V$
\begin{equation} \label{add2} 
c_{v} \circ F_{O} \left(\gamma _{v} ,x_{v} \right)=0,
\end{equation}
where $c_{v}$ is a constraint.

Then, the ${\rm {\mathcal C}}$ constraint machine is a ${\rm {\mathcal D}}$ diffusion machine \cite{ref23}, if only ${\rm {\mathcal C}}$ enforces the constraint ${{C}_{\mathcal{D}}}$, as 
\begin{equation} \label{xeq} 
{{C}_{\mathcal{D}}}: \sum _{v\in V}\left((\S _{v}-f_{T} \left(\gamma _{\Gamma \left(v\right)} ,x_{v} \right) =0) +(c_{v} \circ F_{O} \left(\gamma _{v} ,x_{v} \right)=0) \right)-2\left|V\right|=0.
\end{equation} 

\subsection{Computational Model of Recurrent Gate-Model Quantum Neural Networks}
\begin{theorem} 
The computational model of an ${\rm RQNN}_{QG} $ is a ${\rm {\mathcal D}}\left({\rm RQNN}_{QG} \right)$ diffusion machine with linear transition functions $f_{T} \left({\rm RQNN}_{QG} \right)$.
\end{theorem}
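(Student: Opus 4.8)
The plan is to mirror the proof of the preceding theorem for ${\rm QNN}_{QG}$ and then upgrade the resulting constraint machine to a diffusion machine by exploiting the recurrent structure. First I would construct the environmental graph ${\rm {\mathcal G}}\left({\rm RQNN}_{QG}\right)=\left(V,S\right)$ along the lines of the Proposition, with $V={\rm {\mathcal S}}_{in}\cup {\rm {\mathcal U}}\cup {\rm {\mathcal Y}}$; the difference from the ${\rm QNN}_{QG}$ case is that, since the connections of ${\rm RQNN}_{QG}$ form a directed graph along a sequence and the classical side information propagates only backward, the forward arcs $s_{ij}$ carrying the gate parameters $\theta_{ij}=\theta_j$ are supplemented by backward arcs transporting the state variables that encode the side information about the previous $\left(r-1\right)$-th running sequence. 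I would then partition $V=\bigcup_{i=1}^{p}V_i$ into its vertex types as in \eqref{28)}, and for each $V_i$ introduce the transition function $f_T$ of \eqref{ZEqnNum801212} and the output function $F_O$ of \eqref{ZEqnNum904058}, obtaining the associated pair $\phi\left(V_i\right)=\left(f_T,F_O\right)$.

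Next I would argue the linearity of $f_T\left({\rm RQNN}_{QG}\right)$: by the definition of ${\rm RQNN}_{QG}$, a given unitary $U_i\left(\theta_i\right)$ acts on the output of $U_{i-1}\left(\theta_{i-1}\right)$ and the quantum evolution contains no nonlinearities, so the same reasoning that yields \eqref{27)} shows that the state update $x_{U_l\left(\theta_l\right)}=U_l\left(\theta_l\right)x_{U_{l-1}\left(\theta_{l-1}\right)}$ is linear, hence the induced map $f_T\left({\rm RQNN}_{QG}\right):\left(\gamma_{\Gamma\left(v\right)},x_v\right)\to f_T\left(\gamma_{\Gamma\left(v\right)},x_v\right)$ is linear. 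This settles the linearity of the transition functions claimed in the statement and, exactly as in the ${\rm QNN}_{QG}$ case, produces well-defined constraints on $f_T$ and $F_O$.

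The part that distinguishes the two theorems is the verification of the diffusion constraint. I would introduce the state variable $\S_v$ via the constraint $\S_v-f_T\left(\gamma_{\Gamma\left(v\right)},x_v\right)=0$ of \eqref{add1} and a constraint $c_v$ on the output via $c_v\circ F_O\left(\gamma_v,x_v\right)=0$ of \eqref{add2}, and then check that ${\rm RQNN}_{QG}$ actually enforces the compound constraint ${{C}_{\mathcal{D}}}$ of \eqref{xeq}. The key observation is that, because side information about the previous running sequences is available and propagates backward along the sequence, the variables $\S_v$ are not merely local book-keeping quantities (as the states $\left| \gamma_v \right\rangle$ were for ${\rm QNN}_{QG}$, where the compact constraint \eqref{ZEqnNum734924} fails to vanish) but are genuinely shared and diffused across the vertices and across measurement rounds; this is precisely the situation in which a constraint machine qualifies as a diffusion machine in the sense of \eqref{xeq}. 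Combining this with the linearity established above yields that ${\rm {\mathcal C}}\left({\rm RQNN}_{QG}\right)$ is a ${\rm {\mathcal D}}\left({\rm RQNN}_{QG}\right)$ diffusion machine with linear transition functions $f_T\left({\rm RQNN}_{QG}\right)$.

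The main obstacle I expect is making rigorous the step that turns the $\ne 0$ of \eqref{ZEqnNum734924} into the $=0$ of \eqref{xeq}: one has to pin down exactly which data about the $\left(r-1\right)$-th round is carried by the backward arcs, and then confirm that once the $\S_v$ are constrained to equal the linear transition outputs, the per-vertex contributions cancel against the $2\left|V\right|$ term in ${{C}_{\mathcal{D}}}$. Everything else is a direct transcription of the ${\rm QNN}_{QG}$ argument.
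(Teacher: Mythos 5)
Your proposal follows essentially the same route as the paper: per-vertex constraints on $f_{T} \left({\rm RQNN}_{QG} \right)$ and $F_{O} \left({\rm RQNN}_{QG} \right)$ summed into the diffuse constraint of \eqref{xeq}, whose satisfaction---asserted from the recurrent, backward-only side-information structure---yields ${\rm {\mathcal C}}\left({\rm RQNN}_{QG} \right)={\rm {\mathcal D}}\left({\rm RQNN}_{QG} \right)$ exactly as in \eqref{ZEqnNum914056}, and the obstacle you flag (why the sum vanishes here but fails to vanish in \eqref{ZEqnNum734924}) is precisely the step the paper also leaves as an assertion rather than a derivation. The only material you omit is the paper's explicit linear realization of the transition function via $H_{t} =x_{t} +iy_{t} $, $Z_{t+1} =U(\vec{\theta })H_{t} +Ex_{t+1} $, the activation $f_{\sigma }^{{\rm RQNN}_{QG} } $ and the ensuing Jacobian norm bounds, which you replace with the equally paper-sanctioned inheritance of linearity from the absence of nonlinearities in the unitary evolution; this does not affect the correctness of the core claim.
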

\begin{proof}
Let ${\rm {\mathcal C}}\left({\rm RQNN}_{QG} \right)$ be the constraint machine of ${\rm RQNN}_{QG} $ with linear transition function $f_{T} \left({\rm RQNN}_{QG} \right)=f_{T} \left(\gamma _{\Gamma \left(v\right)} ,x_{v} \right)$. Using the ${\rm {\mathcal G}}_{{\rm RQNN}_{QG} } $ environmental graph, let $\Lambda _{v} $ be a constraint on $f_{T} \left({\rm RQNN}_{QG} \right)$ of ${\rm RQNN}_{QG} $, $v\in V$ as 
\begin{equation} \label{ZEqnNum411472} 
\Lambda _{v} :{\left| \gamma _{v}  \right\rangle} -f_{T} \left({\rm RQNN}_{QG} \right)=0, 
\end{equation} 
where $\left| \gamma _{v}  \right\rangle $ is the quantum state associated with $\gamma _{v}$ state variable of a given $v$ of ${\rm RQNN}_{QG} $.
With respect to the output function  $F_{O} \left({\rm RQNN}_{QG} \right)=F_{O} \left(\gamma _{v} ,x_{v} \right)$ of ${\rm RQNN}_{QG} $ , let $\omega _{v} $ be a constraint on $F_{O} \left({\rm RQNN}_{QG} \right)$ of ${\rm RQNN}_{QG} $, as
\begin{equation} \label{ZEqnNum937370} 
\omega _{v} :\Omega _{v} \circ F_{O} \left({\rm RQNN}_{QG} \right)=0,
\end{equation} 
where $\Omega _{v}$ is another constraint as $\Omega _{v}(F_{O} \left({\rm RQNN}_{QG} \right))=0$.

Since ${\rm RQNN}_{QG} $ is a recurrent network, for all $v\in V$ of ${\rm {\mathcal G}}_{{\rm RQNN}_{QG} } $, a diffuse constraint ${\mathchar'26\mkern-10mu\lambda} \left(Q\left(x\right)\right)$ can be defined via constraints \eqref{ZEqnNum411472} and \eqref{ZEqnNum937370}, as
\begin{equation} \label{ZEqnNum914056} 
\begin{split}
{\mathchar'26\mkern-10mu\lambda} \left(Q\left(x\right)\right)\\
&=\sum _{v\in V}\left(\Lambda _{v} +\omega _{v} \right)-2\left|V\right|\\
&=\sum _{v\in V}\left((\left| \gamma _{v}  \right\rangle-(f_{T} \left({\rm RQNN}_{QG} \right)) +\Omega _{v}(F_{O} \left({\rm RQNN}_{QG} \right)) \right)-2\left|V\right|\\
&=0, 
\end{split}
\end{equation} 
where $x=\left(x_{1} ,\ldots ,x_{\left|V\right|} \right)$, and $Q\left(x\right)=\left(Q\left(x_{1} \right),\ldots ,Q\left(x_{\left|V\right|} \right)\right)$ is a function that maps all vertexes of ${\rm {\mathcal G}}_{{\rm RQNN}_{QG} } $. Therefore, in the presence of \eqref{ZEqnNum914056}, the relation 
\begin{equation} \label{46)} 
{\rm {\mathcal C}}\left({\rm RQNN}_{QG} \right)={\rm {\mathcal D}}\left({\rm RQNN}_{QG} \right),                                                 
\end{equation} 
follows for an ${\rm RQNN}_{QG} $, where ${\rm {\mathcal D}}\left({\rm RQNN}_{QG} \right)$ is the diffusion machine of ${\rm RQNN}_{QG} $. It is because a constraint machine ${\rm {\mathcal C}}\left({\rm RQNN}_{QG} \right)$ that satisfies \eqref{ZEqnNum914056} is, in fact, a diffusion machine ${\rm {\mathcal D}}\left({\rm RQNN}_{QG} \right)$, see also \eqref{xeq}.

In \eqref{ZEqnNum411472}, the $f_{T} \left({\rm RQNN}_{QG} \right)$ state transition function can be defined for a ${\rm {\mathcal D}}\left({\rm RQNN}_{QG} \right)$ via constraint  \eqref{ZEqnNum411472} as 
\begin{equation} \label{ZEqnNum471522} 
f_{T} \left({\rm RQNN}_{QG} \right)={\left| \gamma _{v}  \right\rangle}.
\end{equation} 

Then, let $H_{t} $ be a unit vector for a unitary $U_{t} \left(\theta _{t} \right)$, $t=1,\ldots ,L-1$, defined as 
\begin{equation} \label{ZEqnNum928379} 
H_{t} =x_{t} +iy_{t} ,                                                                  
\end{equation} 
where $x_{t} $ and $y_{t} $ are real values.

Then, let $Z_{t+1} $ be defined via $U(\vec{\theta })$ and \eqref{ZEqnNum928379} as
\begin{equation} \label{ZEqnNum577141} 
Z_{t+1} =U(\vec{\theta })H_{t} +Ex_{t+1} ,                                                     
\end{equation} 
where $E$ is a basis vector matrix \cite{ref25}.

Then, by rewriting $U(\vec{\theta })$ as 
\begin{equation} \label{50)} 
U(\vec{\theta })=\phi +i\varphi ,                                                               
\end{equation} 
where $\phi ,\varphi $ are real parameters, allows us to evaluate $U(\vec{\theta })H_{t} $ as
\begin{equation} \label{51)} 
\left( \begin{matrix}
   \operatorname{Re}( U( {\vec{\theta }}){{H}_{t}})\\
   \operatorname{Im}( U( {\vec{\theta }}){{H}_{t}}\\
\end{matrix} \right)=\left( \begin{matrix}
   \phi  & -\varphi   \\
   \varphi  & \phi   \\
\end{matrix} \right)\left( \begin{matrix}
   {{x}_{t}}  \\
   {{y}_{t}}  \\
\end{matrix} \right)
\end{equation} 
with
\begin{equation} \label{52)} 
H_{t+1} =f_{\sigma }^{{\rm RQNN}_{QG} } \left(Z_{t+1} \right),                                                                  
\end{equation} 
where $H_{t+1}$ is normalized at unity, and function $f_{\sigma }^{{\rm RQNN}_{QG} } \left(\cdot \right)$ is defined as
\begin{equation} \label{ZEqnNum368075} 
f_{\sigma }^{{\rm RQNN}_{QG} } \left(Z\right)=\left\{\begin{array}{l} {Z,{\rm \; if\; }\left|Z\right|_{1} \ge 0{\rm \; }} \\ {0,{\rm \; if\; }\left|Z\right|_{1} <0{\rm \; }} \end{array}\right. , 
\end{equation} 
where ${{\left| \cdot  \right|}_{1}}$ is the $L1$-norm.

Since the ${\rm RQNN}_{QG} $ has linear transition function, \eqref{ZEqnNum368075} is also linear, and allows us to rewrite \eqref{ZEqnNum368075} via the environmental graph representation for a particular $\left(\gamma _{\Gamma \left(v\right)} ,x_{v} \right)$, as 
\begin{equation} \label{54)} 
f_{\sigma }^{{\rm RQNN}_{QG} } \left(Z\right)=\left\{\begin{array}{l} {f_{T} \left(\gamma _{\Gamma \left(v\right)} ,x_{v} \right),{\rm \; if\; }\left|Z\right|_{1} \ge 0{\rm \; }} \\ {0,{\rm \; if\; }\left|Z\right|_{1} <0{\rm \; }} \end{array}\right. , 
\end{equation} 
where $f_{T} \left(\gamma _{\Gamma \left(v\right)} ,x_{v} \right)$ is given in \eqref{ZEqnNum471522}. 

Thus, by setting $t=\nu $, the term $H_{t} $ can be rewritten via \eqref{ZEqnNum471522} and \eqref{ZEqnNum577141} as
\begin{equation} \label{55)} 
H_{t} =x_{\nu } =\operatorname{Re}\left(x_{\nu } \right)+i\operatorname{Im}\left(x_{\nu } \right).                                           
\end{equation} 
Then, the $Y_{t} \left({\rm RQNN}_{QG} \right)$ output of ${\rm RQNN}_{QG} $ is evaluated as
\begin{equation} \label{56)} 
{{Y}_{t}}\left( \text{RQN}{{\text{N}}_{QG}} \right)=W\left( \begin{matrix}
   \operatorname{Re}\left( {{x}_{\nu }} \right)  \\
   \operatorname{Im}\left( {{x}_{\nu }} \right)  \\
\end{matrix} \right),
\end{equation} 
where $W$ is an output matrix \cite{ref25}. 

Then let $\left|\Gamma \left(v\right)\right|=L$, therefore at a particular objective function $f\left(\theta \right)$ of the ${\rm RQNN}_{QG} $, the derivative ${\textstyle\frac{df\left(\theta \right)}{dx_{\nu } }} $ can be evaluated as
\begin{equation} \label{57)} 
\begin{split}
 \frac{df\left( \theta  \right)}{d{{x}_{\nu }}}&=\frac{df\left( \theta  \right)}{d{{x}_{\left| \Gamma \left( v \right) \right|}}}\frac{d{{x}_{\left| \Gamma \left( v \right) \right|}}}{d{{x}_{\nu }}} \\ 
 & =\frac{df\left( \theta  \right)}{d{{x}_{\left| \Gamma \left( v \right) \right|}}}\prod\limits_{k=\nu }^{\left| \Gamma \left( v \right) \right|-1}{\frac{d{{x}_{k+1}}}{d{{x}_{k}}}} \\ 
 & =\frac{df\left( \theta  \right)}{d{{x}_{\left| \Gamma \left( v \right) \right|}}}\prod\limits_{k=\nu }^{\left| \Gamma \left( v \right) \right|-1}{{{D}_{k+1}}{{\left( U( {\vec{\theta }}) \right)}^{T}}},  
\end{split}
\end{equation} 
where 
\begin{equation} \label{58)} 
D_{k+1} =diag\left(Z_{k+1} \right) 
\end{equation} 
is a Jacobian matrix \cite{ref25}. For the norms the relation
\begin{equation} \label{59)} 
\left\| \frac{df\left(\theta \right)}{dx_{\nu } } \right\| \le \left\| \frac{df\left(\theta \right)}{dx_{\left|\Gamma \left(v\right)\right|} } \right\| \prod _{k=\nu }^{\left|\Gamma \left(v\right)\right|-1}\left\| D_{k+1} \left(U(\vec{\theta })\right)^{T} \right\|  , 
\end{equation} 
holds, where 
\begin{equation} \label{60)} 
\left\| D_{k+1} \left(U(\vec{\theta })\right)^{T} \right\| =\left\| D_{k+1} \right\| .                                                          
\end{equation} 
The proof is concluded here.
\end{proof}

\section{Optimal Learning}
\label{sec5}
\subsection{Gate-Model Quantum Neural Network}
\begin{theorem}
A supervised learning is an optimal learning for a ${\rm {\mathcal C}}\left({\rm QNN}_{QG} \right)$.  
\end{theorem}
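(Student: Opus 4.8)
The plan is to identify the learning prescribed by the environmental constraints of ${\rm {\mathcal C}}({\rm QNN}_{QG})$ with supervised learning, and then to argue that no admissible protocol for a ${\rm QNN}_{QG}$ can enforce a larger consistent family of constraints, so that supervised learning is optimal in the constraint-based sense of \cite{ref23}.

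First I would recall the output of the computational-model theorem for ${\rm QNN}_{QG}$: ${\rm {\mathcal C}}({\rm QNN}_{QG})$ is the constraint machine with \emph{linear} transition function $f_T({\rm QNN}_{QG})$ and output function $F_O({\rm QNN}_{QG})$, subject to the compact constraint $\pi_v$ of \eqref{ZEqnNum734924} assembled from $\zeta_v$ of \eqref{ZEqnNum270596} and $\varphi_v$ of \eqref{ZEqnNum506759}. By the constraint-machine definition \eqref{ZEqnNum864288}, learning for ${\rm {\mathcal C}}$ consists of enforcing the constraints $C_\iota$ over all $\iota\in\tilde{\mathcal{I}}$; the task is to pin down which instantiation of $C_\iota$ the ${\rm QNN}_{QG}$ environment dictates, and to check that the agent realizing it is exactly a supervised learner.

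Second I would trace where the true label enters. Through the readout vertex of the output space ${\rm {\mathcal Y}}$, the constraint $\varphi_v$ is tied to the loss ${\rm {\mathcal L}}(x_0,\tilde l(z))=1-l(z)\tilde l(z)$ of \eqref{ZEqnNum125883}, hence to the objective $f(\vec\theta)$ of \eqref{ZEqnNum830871}. Since ${\rm {\mathcal L}}$ is a function of the pair $(z,l(z))$, and the training set ${\rm {\mathcal S}}_T=\{z^{(r)},l(z^{(r)})\}_{r=1}^{R}$ supplies precisely such labeled pairs, the constraint $\varphi_v$ can be evaluated --- and the classical side information that drives the gradient computation of $f(\vec\theta)$ can be produced and routed (here the arbitrary forward/backward propagation of classical side information permitted in a ${\rm QNN}_{QG}$ is exactly what a backpropagation-type update needs) --- only when every input $x_0=|z,1\rangle$ is accompanied by its true label. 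Thus the environmental graph forces the learner to consume labeled data and to propagate the derived error information through ${\rm {\mathcal G}}_{{\rm QNN}_{QG}}$, which is precisely the mechanism of supervised learning; this establishes that supervised learning is \emph{a} learning for ${\rm {\mathcal C}}({\rm QNN}_{QG})$.

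Third I would rule out any richer paradigm, establishing optimality. Because the quantum flow is strictly forward along the DAG ${\rm {\mathcal G}}_{{\rm QNN}_{QG}}$ and, by the defining properties of a ${\rm QNN}_{QG}$, no side information about previous running sequences is available, the measurement rounds $r=1,\ldots,R$ are mutually independent given the labels: there is no diffuse/recurrent constraint of the form \eqref{ZEqnNum914056} to exploit, in contrast with the ${\rm RQNN}_{QG}$ case treated above. Hence the maximal consistent family of constraints that an admissible agent can enforce over $\tilde{\mathcal{I}}$ is exactly $\{\pi_v\}_{v\in V}$, whose satisfaction is the minimization of $f(\vec\theta)$ on ${\rm {\mathcal S}}_T$ --- i.e., supervised learning. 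An unsupervised protocol enforces a strict subset, since it must drop the label-dependent part of $\varphi_v$, while a recurrent or reinforcement protocol would demand side information the ${\rm QNN}_{QG}$ environment never exposes; so supervised learning both saturates the enforceable constraints and requires nothing beyond the available side information, which is the operative sense of ``optimal''. The step I expect to be the main obstacle is making ``optimal learning'' precise enough to prove rather than merely assert: I would fix it through the constraint-based notion of \cite{ref23} --- an agent $f$ is an optimal learner for ${\rm {\mathcal C}}$ when it enforces $C_\iota$ over all of $\tilde{\mathcal{I}}$ using exactly the side information that the environmental graph exposes --- and then verify, using the linearity of $f_T({\rm QNN}_{QG})$ (so that $f(\vec\theta)$ has the tractable stationarity/curvature structure accessible to the Euler--Lagrange and Hessian tools recalled in \sref{sec3}), that the supervised learner meets this definition with equality; checking that linearity genuinely forces the constraint system to be exactly saturated, with no residual slack a cleverer protocol could remove on a ${\rm QNN}_{QG}$, is the delicate part.
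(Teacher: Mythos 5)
Your proposal and the paper's proof diverge substantially. The paper's argument is constructive and variational: it rewrites the compact constraint $\pi_v$ of \eqref{ZEqnNum734924} as a linear system $Af^{*}(x)=b(x)$ (see \eqref{ZEqnNum507571}), introduces a non-empty supervised set ${\mathcal{S}}_{L(\text{QNN})}=\{x_{\kappa},y_{\kappa}\}$, forms a Lagrangian $\mathcal{L}=\langle Pf^{*},Pf^{*}\rangle+\int_{\mathcal{X}}\lambda(x)\pi_{v}(x,f^{*}(x))dx$, and solves the resulting Euler--Lagrange equations to obtain the closed form \eqref{ZEqnNum763496}, in which $f^{*}$ is a Green-function expansion whose coefficients $\chi_{\kappa}$ are proportional to the residuals $y_{\kappa}-f^{*}(x_{\kappa})$ on the labeled pairs. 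The claimed optimality rests entirely on \eqref{ZEqnNum763496} being the stationary solution of that variational problem, and the identification with supervised learning rests on the solution being written explicitly in terms of the supervised pairs. None of this machinery appears in your proposal: you never perform the reduction to $Af^{*}=b$, never set up the Lagrangian, and never exhibit the solution whose form is what actually ties the constraint machine to a supervised learner.

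What you offer instead is a structural exclusion argument: supervised learning is forced because the loss \eqref{ZEqnNum125883} consumes labeled pairs, and nothing richer is admissible because a ${\rm QNN}_{QG}$ exposes no inter-round side information. This is a reasonable narrative, and it correctly isolates the role of the label and of the side-information restrictions, but as a proof it has a genuine gap that you yourself flag and do not close. First, the claim that the maximal enforceable constraint family is exactly $\{\pi_v\}$ and that any unsupervised or reinforcement protocol enforces a strict subset is asserted, not demonstrated; the paper avoids having to prove such a maximality statement precisely by instead exhibiting the optimal solution of a concrete variational problem. Second, your proposed definition of optimality ("enforces $C_\iota$ over all of $\tilde{\mathcal{I}}$ using exactly the exposed side information") differs from the paper's operative notion (stationarity of the Euler--Lagrange functional under \eqref{eqrel}), and the verification step you defer --- that linearity of $f_T({\rm QNN}_{QG})$ forces exact saturation with no residual slack --- is exactly the content that would need to be supplied. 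Without either the explicit Euler--Lagrange solution \eqref{ZEqnNum763496} or a completed maximality argument, the proposal establishes only that supervised learning is \emph{a} learning for ${\rm {\mathcal C}}({\rm QNN}_{QG})$, not that it is \emph{optimal}.
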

\begin{proof}
Let $\pi _{v} $ be the compact constraint on $f_{T} \left({\rm QNN}_{QG} \right)$ and $F_{O} \left({\rm QNN}_{QG} \right)$ of ${\rm {\mathcal C}}\left({\rm QNN}_{QG} \right)$ from \eqref{ZEqnNum734924}, and let $A$ be a constraint matrix. Then, \eqref{ZEqnNum734924} can be reformulated as
\begin{equation} \label{ZEqnNum507571} 
\pi _{v} \left(f_{T} \left({\rm QNN}_{QG} \right),F_{O} \left({\rm QNN}_{QG} \right)\right)=Af^{*} \left(x\right)-b\left(x\right)=0.                           
\end{equation} 
where $b\left(x\right)$ is a smooth vector-valued function with compact support \cite{ref23}, ${{f}^{*}}:\mathcal{I}\to {{\mathbb{C}}^{n}}$,
\begin{equation} \label{62)} 
f^{*}(x) =\left(f_{T} \left({\rm QNN}_{QG} \right),F_{O} \left({\rm QNN}_{QG} \right), x\right) 
\end{equation}
is the compact function subject to be determined such that
\begin{equation}\label{eqrel}
\forall x\in \mathcal{X}:{{\pi }_{v}}\left( v,{{f}^{*}}\left( x \right) \right)=0.
\end{equation}
The problem formulated via \eqref{ZEqnNum507571} can be rewritten as
\begin{equation} \label{63)} 
Af^{*} \left(x\right)=b\left(x\right).                                                                    
\end{equation} 
As follows, learning of functions $f_{T} \left({\rm QNN}_{QG} \right)$ and $F_{O} \left({\rm QNN}_{QG} \right)$ of ${\rm {\mathcal C}}\left({\rm QNN}_{QG} \right)$ can be reduced to the determination of function $f^{*} \left(x\right)$, which problem is solvable via the Euler-Lagrange equations \cite{ref23,ref39,ref40}.

Then, let ${{\mathcal{S}}_{L\left( \text{QNN} \right)}}$ be a non-empty supervised learning set defined as a collection 
\begin{equation} \label{64)} 
{{{\mathcal{S}}_{L\left( \text{QNN} \right)}}}:\left\{x_{\kappa } ,y_{\kappa } ,\kappa \in {\rm {\mathbb{N}}}_{\left| \mathcal{X} \right|} \right\},                                                       
\end{equation} 
where $\left(x_{\kappa } ,y_{\kappa } \right)$, $y_{\kappa } =f^{*} \left(x_{\kappa } \right)$ is a supervised pair, and $\left| \mathcal{X} \right|$ is the cardinality of the perceptive space $\mathcal{X}$ associated with ${{{\mathcal{S}}_{L\left( \text{QNN} \right)}}}$.

Since ${{{\mathcal{S}}_{L\left( \text{QNN} \right)}}}$ is non-empty set, $f^{*} \left(x\right)$ can be evaluated by the Euler-Lagrange equations \cite{ref23,ref39,ref40}, as
\begin{equation} \label{ZEqnNum900941} 
f^{*} \left(x\right)=\frac{1}{\ell } \left(-{A}^{T}\lambda \left(x\right)-\frac{1}{\left| \mathcal{X} \right|} \sum _{\kappa =1}^{\left| \mathcal{X} \right|}\left(f^{*} \left(x\right)-y_{\kappa } \right)\Upsilon \left(x-x_{\kappa } \right) \right), 
\end{equation} 
where ${{A}^{T}}$ is the transpose of the constraint matrix $A$, and $\ell $ is a differential operator as 
\begin{equation} \label{67)} 
\ell =\sum _{\kappa =0}^{k}\left(-1\right)^{\kappa }  c_{\kappa } \nabla ^{2\kappa } ,                                                               
\end{equation} 
where $c_{\kappa } $-s are constants, ${{\nabla }^{2}}$ is a Laplacian operator such that ${{\nabla }^{2}}f\left( x \right)=\sum\nolimits_{i}{\partial _{i}^{2}{{f}}\left( x \right)}$; while $\Upsilon $ is as
\begin{equation} \label{68)} 
\Upsilon \left(x-x_{\kappa } \right)=\ell {\rm {\mathcal G}}\left(x,x_{\kappa } \right),                                                             
\end{equation} 
where ${\rm {\mathcal G}}\left(\cdot \right)$ is the Green function of differential operator $\ell $. Since function ${\rm {\mathcal G}}\left(\cdot \right)$ is translation invariant, the relation  
\begin{equation} \label{ZEqnNum591882} 
{\rm {\mathcal G}}\left(x,x_{\kappa } \right)={\rm {\mathcal G}}\left(x-x_{\kappa } \right) 
\end{equation} 
follows. 
Since the constraint that has to be satisfied over the perceptual space $\mathcal{X}$ is given in \eqref{eqrel}, the $\mathcal{L}$ Lagrangian can be defined as
\begin{equation}
\mathcal{L}=\left\langle P{{f}^{*}},P{{f}^{*}} \right\rangle +\int\limits_{{\rm {\mathcal X}}}{\lambda \left( x \right){{\pi }_{v}}\left( x,{{f}^{*}}\left( x \right) \right)dx},
\end{equation}
where $\left\langle \cdot ,\cdot  \right\rangle $ is the inner product operator, while $P$ is defined via \eqref{67)} as
\begin{equation}
\ell ={{P}^{\dagger }}P,
\end{equation}
where ${{P}^{\dagger }}$ is the adjoint of $P$, while $\lambda \left(x\right)$ is the Lagrange multiplier as
\begin{equation} \label{ZEqnNum815037} 
\lambda \left(x\right)=-\left(A{A}^{T}\right)^{-1} \left(\gamma \ell b\left(x\right)+\frac{1}{\left| \mathcal{X} \right|} \sum _{\kappa =1}^{\left| \mathcal{X} \right|}A\left(f^{*} \left(x\right)-y_{\kappa } \right)\Upsilon \left(x-x_{\kappa } \right) \right),          
\end{equation} 
where 
\begin{equation} \label{71)} 
\gamma =\int\limits_{{\rm {\mathcal X}}}{\rm {\mathcal G}}\left(x-x_{\kappa } \right) ,                                                                    
\end{equation} 
and $\ell b$ is as
\begin{equation} \label{ZEqnNum212449} 
\ell b\left(x\right)=-A\left({A}^{T}\lambda \left(x\right)+\frac{1}{\left| \mathcal{X} \right|} \sum _{\kappa =1}^{\left| \mathcal{X} \right|}\left(f^{*} \left(x\right)-y_{\kappa } \right)\Upsilon \left(x-x_{\kappa } \right) \right).                      
\end{equation} 
Then, \eqref{ZEqnNum900941} can be rewritten using \eqref{ZEqnNum815037} and \eqref{ZEqnNum212449} as
\begin{equation} \label{73)} 
f^{*} \left(x\right)=\frac{1}{\gamma \ell } \left(H\left(x\right)+\frac{1}{\left| \mathcal{X} \right|} {\sum _{\kappa =1}^{\left| \mathcal{X} \right|}}\Phi \left(y_{\kappa } -f^{*} \left(x\right)\right)\Upsilon \left(x-x_{\kappa } \right) \right),                       
\end{equation} 
where $H\left(x\right)$ is as 
\begin{equation} \label{74)} 
H\left(x\right)=\gamma {A}^{T}\left(A{A}^{T}\right)^{-1} \ell b\left(x\right) 
\end{equation} 
and $\Phi $ is as
\begin{equation} \label{75)} 
\Phi ={{\mathbf{I}}_{n}} -{A}^{T}\left(A{A}^{T}\right)^{-1} A,                                                          
\end{equation} 
where ${{\mathbf{I}}_{n}}$ is an identity matrix.

Therefore, after some calculations, $f^{*} \left(x\right)$ can be expressed as
\begin{equation} \label{ZEqnNum763496}  
f^{*} \left(x\right)=\frac{1}{\gamma } \int\limits_{{\rm {\mathcal X}}}{\rm {\mathcal G}}\left(z\right)H\left(x-z\right)dz +\sum _{\kappa =1}^{\left| \mathcal{X} \right|}\Phi \chi _{\kappa } {\rm {\mathcal G}}\left(x-x_{\kappa } \right) , 
\end{equation} 
where $\chi _{\kappa } $ is as 
\begin{equation} \label{77)} 
\chi _{\kappa } =\frac{1}{\left| \mathcal{X} \right|} \frac{y_{\kappa } -f^{*} \left(x_{\kappa } \right)}{\gamma } .                                                        
\end{equation} 
The compact constraint of ${\rm {\mathcal C}}\left({\rm QNN}_{QG} \right)$ determined via \eqref{ZEqnNum763496} is optimal, since \eqref{ZEqnNum763496} is the optimal solution of the Euler--Lagrange equations. 

The proof is concluded here.
\end{proof}

\begin{lemma}
There exists a supervised learning for a ${\rm {\mathcal C}}\left({\rm QNN}_{QG} \right)$ with  complexity $\mathcal{O}\left( \left| S \right| \right)$, where $|S|$ is the number arcs (number of gate parameters) of ${\rm {\mathcal G}}_{{\rm QNN}_{QG} }$.
\end{lemma}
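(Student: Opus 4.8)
The plan is to exhibit an explicit supervised learning procedure for ${\rm {\mathcal C}}\left({\rm QNN}_{QG} \right)$ that exploits the forward data-flow structure of the environmental graph ${\rm {\mathcal G}}_{{\rm QNN}_{QG} }$, and then to bound its running time by a constant number of sweeps over the arc set $S$. Since ${\rm {\mathcal G}}_{{\rm QNN}_{QG} }$ is a DAG, the topological ordering function $f_{\angle }$ produces the ordered chain $f_{\angle } ({\rm {\mathcal G}}_{{\rm QNN}_{QG} })$ of the $L$ unitaries at cost linear in $\left| V\right| +\left| S\right| $; I would fix this ordering once and organize the learning as a forward sweep followed by a backward sweep along it, with the supervised pairs of ${{\mathcal{S}}_{L\left( \text{QNN} \right)}}$ from \eqref{64)} driving the gate-parameter updates.

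First I would run the forward sweep. Using \eqref{27)}, each state $x_{U_{l} \left(\theta _{l} \right)} $ is obtained from $x_{U_{l-1} \left(\theta _{l-1} \right)} $ by applying the single unitary attached to the arc entering $v_{U_{l}} $; because the transition function $f_{T} \left({\rm QNN}_{QG} \right)$ is linear (as established above), every such update is a constant-cost operation charged to that arc, so the forward sweep touches each arc of $S$ once and costs $\mathcal{O}\left( \left| S \right| \right)$. Along the way it also evaluates $F_{O} \left({\rm QNN}_{QG} \right)$, hence the predicted label $\tilde{l}(z)$ and the loss ${\rm {\mathcal L}}(x_{0} ,\tilde{l}(z))$.

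Next I would run the backward sweep that enforces the compact constraint $\pi _{v} $ and realizes the Euler--Lagrange parameter update prescribed by \eqref{ZEqnNum763496}. Processing the vertices in reverse topological order, each arc $s_{ij} $ carries exactly one multiplier/gradient message of the form \eqref{ZEqnNum815037}, computed from the already-available downstream quantities; linearity of $f_{T} \left({\rm QNN}_{QG} \right)$ makes the per-arc Jacobian constant, and the message aggregation at a vertex $v$ costs $\mathcal{O}\left( \left| \Gamma \left( v \right) \right| \right)$, so the backward sweep costs $\sum _{v\in V}\left| \Gamma \left( v \right) \right| =\mathcal{O}\left( \left| S \right| \right)$. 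Summing the two sweeps, and noting that the finite training set ${{\mathcal{S}}_{L\left( \text{QNN} \right)}}$ contributes only additively within a fixed number of passes, the total complexity is $\mathcal{O}\left( \left| S \right| \right)$; since $\left| S \right|$ equals the number of gate parameters, this is the claimed bound.

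The main obstacle I anticipate is making precise the claim that the global Euler--Lagrange solution \eqref{ZEqnNum763496} is computable by purely local message passing on the DAG, i.e. that linearity of $f_{T} \left({\rm QNN}_{QG} \right)$ together with acyclicity forces the optimal $f^{*} \left(x\right)$ to decompose into one independent update per arc, with no iterated fixed point to reach --- in contrast to the recurrent ${\rm RQNN}_{QG} $ case, where the product bound \eqref{59)} already signals that repeated propagation is intrinsic. The delicate part of the accounting is the translation-invariant Green-function term $\frac{1}{\gamma } \int _{{\rm {\mathcal X}}}{\rm {\mathcal G}}\left(z\right)H\left(x-z\right)dz$ in \eqref{ZEqnNum763496}: I would argue that it is either absorbed into a preprocessing step whose cost is independent of $\left| S\right| $, or contributes only $\mathcal{O}\left( 1 \right)$ per arc once $H\left(x\right)$ from \eqref{74)} has been formed during the backward sweep, so that it does not break the linear budget.
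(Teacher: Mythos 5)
Your core accounting argument is the same as the paper's: the paper proves this lemma by exhibiting a concrete procedure (its Algorithm~1) that performs one forward sweep over the topologically ordered environmental graph to build the quantities $V_{U_{i}(\theta_{i})}$, $Q_{U_{i}(\theta_{i})}$, $W_{U_{i}(\theta_{i})}$, followed by a backward sweep that propagates the errors $\delta_{U_{i}(\theta_{i})}$ from $U_{L}(\theta_{L})$ down to $U_{1}(\theta_{1})$ and forms the per-arc gradients $g_{U_{i}(\theta_{i}),U_{j}(\theta_{j})}=\delta'_{U_{i}(\theta_{i})}W_{U_{j}(\theta_{j})}$, and then observes that the update touches each gate parameter (arc) a constant number of times, giving $\mathcal{O}\left(\left|S\right|\right)$. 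Your forward sweep via \eqref{27)} and backward sweep with cost $\sum_{v\in V}\left|\Gamma\left(v\right)\right|=\mathcal{O}\left(\left|S\right|\right)$ is exactly this, and that part of your argument is sound.

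Where you diverge --- and where you create an obligation you do not discharge --- is in insisting that the backward sweep ``realizes the Euler--Lagrange parameter update prescribed by \eqref{ZEqnNum763496}.'' The lemma only asserts the \emph{existence} of a supervised learning procedure with complexity $\mathcal{O}\left(\left|S\right|\right)$; it does not require that procedure to compute the global variational solution $f^{*}\left(x\right)$ of the preceding theorem. The paper keeps these two results separate: the Euler--Lagrange machinery (the Green function, the multiplier $\lambda\left(x\right)$, the term $\frac{1}{\gamma}\int_{{\rm {\mathcal X}}}{\rm {\mathcal G}}\left(z\right)H\left(x-z\right)dz$) lives entirely in the optimality theorem, while the complexity lemma is proved by a self-contained backpropagation-style algorithm whose correctness is argued via the chain rule on ${\rm {\mathcal G}}_{{\rm QNN}_{QG}}$, not via \eqref{ZEqnNum763496}. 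You yourself flag the decomposability of the Euler--Lagrange solution into independent per-arc messages as ``the main obstacle,'' and you resolve it only with ``I would argue that it is either absorbed into a preprocessing step \ldots or contributes only $\mathcal{O}\left(1\right)$ per arc,'' which is not an argument. If you drop that identification and simply present your two sweeps as a supervised learning procedure driven by the pairs of ${{\mathcal{S}}_{L\left(\text{QNN}\right)}}$ --- with the backward messages defined by the chain rule as in \eqref{ZEqnNum919206} and \eqref{90)} rather than by \eqref{ZEqnNum815037} --- the gap disappears and your proof coincides with the paper's.
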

\begin{proof}
Let ${\rm {\mathcal G}}_{{\rm QNN}_{QG} } $ be the environmental graph of ${\rm QNN}_{QG} $, such that ${\rm QNN}_{QG} $ is characterized via $\vec{\theta }$ (see \eqref{ZEqnNum837426}). 

The optimal supervised learning method of a ${\rm {\mathcal C}}\left({\rm QNN}_{QG} \right)$ is derived through the utilization of the ${\rm {\mathcal G}}_{{\rm QNN}_{QG} } $ environmental graph of ${\rm QNN}_{QG} $, as follows. 

The ${\rm {\mathcal A}}_{{\rm {\mathcal C}}\left({\rm QNN}_{QG} \right)} $ learning process of ${\rm {\mathcal C}}\left({\rm QNN}_{QG} \right)$ in the ${\rm {\mathcal G}}_{{\rm QNN}_{QG} } $ structure is given in Algorithm 1.

\setcounter{algocf}{0}
\begin{algo}
\small
  \DontPrintSemicolon
\caption{Supervised learning for a ${\rm {\mathcal C}}\left({\rm QNN}_{QG} \right)$}

\textbf{Step 1}. (Initialization.) Set $Y_{n+1}^{\left(r\right)} $, $r=1,\ldots ,R$, where $R$ is the number of total measurements applied for ${\rm QNN}_{QG} $, and let $U(\vec{\theta })$ as given in \eqref{ZEqnNum335090}.

\textbf{Step 2}. (Quantum evolution phase and parameter set.) Apply the unitary sequence $U(\vec{\theta })$ of ${\rm QNN}_{QG} $ realized via the $L$ unitaries to produce output ${\left| Y \right\rangle} $. 
Let us assume that the node set $V$ of ${\rm {\mathcal G}}_{{\rm QNN}_{QG} } $ is topologically sorted via a topological ordering function $f_{\angle } ({\rm {\mathcal G}}_{{\rm QNN}_{QG} })$. 

Then, for a given $x_{U_{i} \left(\theta _{i} \right)} $ (\eqref{23)}) set
\begin{equation} \label{wi} 
V_{U_{i} \left(\theta _{i} \right)} =v_{U_{i}}+Q_{U_{i} \left(\theta _{i} \right)}, 
\end{equation} 
where
\begin{equation} \label{qi} 
Q_{U_{i} \left(\theta _{i} \right)} =\sum _{h\in \Xi \left(i\right)}\theta _{hi}V_{U_{h} \left(\theta _{h} \right)} +{{B}_{{{v}_{{{U}_{i}}}}}}, 
\end{equation} 
where $\Xi \left(i\right)$ refers to the parent set of $v_{U_{i}} $, ${{B}_{{{v}_{{{U}_{i}}}}}}$ is a bias relative to $v_{U_{i}}$.

\textbf{Step 3}. (Error initialization.) Set the ${\rm P}^{(r)} ({\rm {\mathcal G}}_{{\rm QNN}_{QG} })$ post-processing associated to the $r$-th measurement round on ${\rm {\mathcal G}}_{{\rm QNN}_{QG} } $, as follows. 

For $i=1,\ldots ,L$, set $W_{U_{i} \left(\theta _{i} \right)}$ as
\begin{equation} \label{c} 
W_{U_{i} \left(\theta _{i} \right)} =\sum _{h\in \Xi \left(i\right)}\theta _{hi} V_{U_{h} \left(\theta _{h} \right)}.
\end{equation} 

For $i=1,\ldots ,L-1$, compute the error $\delta _{U_{i} \left(\theta _{i} \right)} $ associated to $U_{i} \left(\theta _{i} \right)$ as
\begin{equation} \label{ZEqnNum919206} 
\delta _{U_{i} \left(\theta _{i} \right)} =\frac{dW_{U_{L} \left(\theta _{L} \right)} }{dQ_{U_{i} \left(\theta _{i} \right)} }.               
\end{equation} 

For $i=L$, set 
\begin{equation} \label{85)} 
\delta _{U_{L} \left(\theta _{L} \right)} =\frac{d{\rm {\mathcal L}}(x_{0} ,\tilde{l}(z))}{dQ_{U_{L} \left(\theta _{L} \right)} }.                                                
\end{equation}

\textbf{Step 4}. (Gate parameter updating.) Set a gate parameter modification vector $\vec{\Delta }\theta $ with an $i$-th element $\vec{\Delta }\theta _{i} $ as 
\begin{equation} \label{87)} 
\vec{\Delta }\theta _{i} =W_{U_{i} \left(\theta _{i} \right)},
\end{equation} 
where $W_{U_{i} \left(\theta _{i} \right)}$ is given in \eqref{c}. Update the gate parameters in a backpropagated manner from unitary ${{U}_{L}}\left( {{\theta }_{L}} \right)$ to ${{U}_{1}}\left( {{\theta }_{1}} \right)$, as follows.

For $z={L,\ldots ,1}$:
\begin{quote}
If $\vec{\Delta }{{\theta }_{z}}= 1$, update $\theta _{z}$ as
\begin{equation} \label{ZEqnNum414380} 
\theta '_{z} =\theta _{z}.
\end{equation} 

If $\vec{\Delta }{{\theta }_{z}}\ne 1$, update $\theta _{z}$ as
\begin{equation} \label{ZEqnNum426724} 
\theta ' _{z} =(\vec{\Delta }\theta _{z}) \theta _{z}.
\end{equation} 
\end{quote}

\textbf{Step 5}. (Gradient computation). For ${i=2,\ldots ,L}$, and $j\in \Xi \left(i\right)$, determine the gradient $g_{U_{i} \left(\theta _{i} \right),U_{j} \left(\theta _{j} \right)} $ between unitaries $U_{i} \left(\theta _{i} \right)$ and $U_{j} \left(\theta _{j} \right)$, as 
\begin{equation} \label{90)} 
g_{U_{i} \left(\theta _{i} \right),U_{j} \left(\theta _{j} \right)} =\delta '_{U_{i} \left(\theta _{i} \right)} W_{U_{j} \left(\theta _{j} \right)},                                                       
\end{equation} 
where $\delta '_{U_{i} \left(\theta _{i} \right)}$ is the updated error evaluated as
\begin{equation}\label{err}
\delta '_{U_{i} \left(\theta _{i} \right)}=(\vec{\Delta }\theta _{i})\delta _{U_{i} \left(\theta _{i} \right)}.
\end{equation}

\textbf{Step 6}. Apply steps 1-5, for $\forall r$ measurements.
 
\end{algo}

%

The optimality of Algorithm 1 arises from the fact that in Step 4, the gradient computation involves all the gate parameters of the ${\rm QNN}_{QG}$, and the gate parameter updating procedure has a computational complexity $\mathcal{O}\left( \left| S \right| \right)$. The ${\rm QNN}_{QG}$ complexity is yielded from the gate parameter updating mechanism that utilizes backpropagated classical side information for the learning method.

The proof is concluded here.
\end{proof}

\subsubsection{Description and Method Validation}

The detailed steps and validation of Algorithm 1 are as follows.

In Step 1, the number $R$ of measurement rounds is set. 

Step 2 is the quantum evolution phase of ${\rm QNN}_{QG} $ that yields an output quantum system ${\left| Y \right\rangle} $ via forward propagation of quantum information through the unitary sequence $U(\vec{\theta })$ realized via the $L$ unitaries. Then, a parameterization follows for each $x_{U_{i} \left(\theta _{i} \right)} $, and the terms $W_{U_{i} \left(\theta _{i} \right)} $ and $Q_{U_{i} \left(\theta _{i} \right)} $ are defined to characterize the $\theta _{i} $ angles of the $U_{i} \left(\theta _{i} \right)$ unitary operations in the ${\rm QNN}_{QG} $. 

In Step 3, side information initializations are made for the error computations. A given $W_{U_{i} \left(\theta _{i} \right)} $ is set as a cumulative quantity with respect to the parent set $\Xi \in i$ of unitary $U_{i} \left(\theta _{i} \right)$ in ${\rm QNN}_{QG} $. 

Note, that \eqref{qi} and \eqref{c} represent side information, thus the gate parameter $\theta_{hi}$ is used to identify a particular unitary $U(\theta_{hi})$.

Let ${{{\mathcal{G}}'_{\text{QN}{{\text{N}}_{QG}}}}}$ be the the environmental graph of ${\rm QNN}_{QG} $ such that the directions of quantum links are reversed. It can be verified that for a ${{{\mathcal{G}}'_{\text{QN}{{\text{N}}_{QG}}}}}$, $\delta _{U_{i} \left(\theta _{i} \right)}$ from \eqref{ZEqnNum919206} can be rewritten as
\begin{equation} \label{r1}
\delta _{U_{i} \left(\theta _{i} \right)} =\sum _{h\in \Xi \left(i\right)}\frac{dW_{U_{L} \left(\theta _{L} \right)} }{dQ_{U_{h} \left(\theta _{h} \right)} }  \frac{dQ_{U_{h} \left(\theta _{h} \right)} }{dW_{U_{i} \left(\theta _{i} \right)} } \frac{dW_{U_{i} \left(\theta _{i} \right)} }{dQ_{U_{i} \left(\theta _{i} \right)} }=Q_{U_{i} \left(\theta _{i} \right)} \sum _{h\in \Xi \left(i\right)}\theta _{hi} \delta _{U_{h} \left(\theta _{h} \right)}, 
\end{equation}
and $\delta _{U_{L} \left(\theta _{L} \right)}$ can be evaluated as given in \eqref{85)} 
\begin{equation} \label{ZEqnNum434530} 
\delta _{U_{L} \left(\theta _{L} \right)} =\frac{d{\rm {\mathcal L}}(x_{0} ,\tilde{l}(z))}{dQ_{U_{L} \left(\theta _{L} \right)} }, 
\end{equation}
while the term $\delta _{U_{i} \left(\theta _{i} \right)} W_{U_{j} \left(\theta _{j} \right)} $  for each $U_{i} \left(\theta _{i} \right)$ can be rewritten as 
\begin{equation} \label{86)} 
\delta _{U_{i} \left(\theta _{i} \right)} W_{U_{j} \left(\theta _{j} \right)} =\frac{d{\rm {\mathcal L}}(x_{0} ,\tilde{l}(z))}{d\theta _{ij} } =\frac{d{\rm {\mathcal L}}(x_{0} ,\tilde{l}(z))}{dQ_{U_{i} \left(\theta _{i} \right)} } \frac{dQ_{U_{i} \left(\theta _{i} \right)} }{d\theta _{ij} }.  
\end{equation} 

Since \eqref{ZEqnNum426724} and \eqref{ZEqnNum414380} are defined via the non-reversed  ${\rm {\mathcal G}}_{{\rm QNN}_{QG} } $, for a given unitary the $\Gamma $ children set is used. The utilization of the $\Xi $ parent set with reversed link directions in ${{{\mathcal{G}}'_{\text{QN}{{\text{N}}_{QG}}}}}$ (see \eqref{r1}, \eqref{ZEqnNum434530}, \eqref{86)}) is therefore analogous to the use of the $\Gamma $ children set with non-reversed link directions in  ${\rm {\mathcal G}}_{{\rm QNN}_{QG} } $. It is because classical side information is available in arbitrary directions in ${\rm {\mathcal G}}_{{\rm QNN}_{QG} } $. 

First, we consider the situation, if $i=1,\ldots ,L-1$, thus the error calculations are associated to unitaries $U_{1} \left(\theta _{1} \right),\ldots ,U_{L-1} \left(\theta _{L-1} \right)$, while the output unitary $U_{L} \left(\theta _{L} \right)$ is proposed for the $i=L$ case. 

In ${\rm {\mathcal G}}_{{\rm QNN}_{QG} } $, the error quantity $\delta _{U_{i} \left(\theta _{i} \right)} $ associated to $U_{i} \left(\theta _{i} \right)$ is determined, where $W_{U_{L} \left(\theta _{L} \right)} $ is associated to the output unitary $U_{L} \left(\theta _{L} \right)$. Only forward steps are required to yield $W_{U_{L} \left(\theta _{L} \right)} $ and $Q_{U_{L} \left(\theta _{L} \right)} $. Then, utilizing the chain rule and using the children set $\Gamma \left(i\right)$ of a particular unitary $U_{i} \left(\theta _{i} \right)$, the term ${dW_{U_{L} \left(\theta _{L} \right)}  \mathord{\left/{\vphantom{dW_{U_{L} \left(\theta _{L} \right)}  dQ_{U_{i} \left(\theta _{i} \right)} }}\right.\kern-\nulldelimiterspace} dQ_{U_{i} \left(\theta _{i} \right)} } $ in $\delta _{U_{i} \left(\theta _{i} \right)} $ can be rewritten as ${\textstyle\frac{dW_{U_{L} \left(\theta _{L} \right)} }{dQ_{U_{i} \left(\theta _{i} \right)} }} =\sum _{h\in \Gamma \left(i\right)}{\textstyle\frac{dW_{U_{L} \left(\theta _{L} \right)} }{dQ_{U_{h} \left(\theta _{h} \right)} }}  {\textstyle\frac{dQ_{U_{h} \left(\theta _{h} \right)} }{dW_{U_{i} \left(\theta _{i} \right)} }} {\textstyle\frac{dW_{U_{i} \left(\theta _{i} \right)} }{dQ_{U_{i} \left(\theta _{i} \right)} }} $. In fact, this term equals to $Q_{U_{i} \left(\theta _{i} \right)} \sum _{h\in \Gamma \left(i\right)}\theta _{hi}  \delta _{U_{h} \left(\theta _{h} \right)} $, where $\delta _{U_{h} \left(\theta _{h} \right)} $ is the error associated to a $U_{h} \left(\theta _{h} \right)$, such that $U_{h} \left(\theta _{h} \right)$ is a children unitary of $U_{i} \left(\theta _{i} \right)$. The $\delta _{U_{h} \left(\theta _{h} \right)} $ error quantity associated to a children unitary $U_{h} \left(\theta _{h} \right)$ of $U_{i} \left(\theta _{i} \right)$ can also be determined in the same manner, that yields $\delta _{U_{h} \left(\theta _{h} \right)} ={dW_{U_{L} \left(\theta _{L} \right)}  \mathord{\left/{\vphantom{dW_{U_{L} \left(\theta _{L} \right)}  dQ_{U_{h} \left(\theta _{h} \right)} }}\right.\kern-\nulldelimiterspace} dQ_{U_{h} \left(\theta _{h} \right)} } $. As follows, by utilizing side information in ${\rm {\mathcal G}}_{{\rm QNN}_{QG} } $ allows us to determine $\delta _{U_{i} \left(\theta _{i} \right)} $ via the ${\rm {\mathcal L}}\left(\cdot \right)$ loss function and the $\Gamma \left(i\right)$ children set of unitary $U_{i} \left(\theta _{i} \right)$, that yields the quantity given in \eqref{ZEqnNum919206}. 

The situation differs if the error computations are made with respect to the output system, thus for the $L$-th unitary $U_{L} \left(\theta _{L} \right)$. In this case, the utilization of the loss function ${\rm {\mathcal L}}(x_{0} ,\tilde{l}\left(z\right))$ allows us to use the simplified formula of $\delta _{U_{L} \left(\theta _{L} \right)} ={d{\rm {\mathcal L}}(x_{0} ,\tilde{l}\left(z\right)) \mathord{\left/{\vphantom{d{\rm {\mathcal L}}(x_{0} ,\tilde{l}\left(z\right)) dQ_{U_{L} \left(\theta _{L} \right)} }}\right.\kern-\nulldelimiterspace} dQ_{U_{L} \left(\theta _{L} \right)} } $, as given in \eqref{85)}. Taking the ${\textstyle\frac{d{\rm {\mathcal L}}(x_{0} ,\tilde{l}\left(z\right))}{d\theta _{ij} }} $ derivative of the loss function ${\rm {\mathcal L}}(x_{0} ,\tilde{l}\left(z\right))$ with respect to the angle $\theta _{ij} $ yields ${\textstyle\frac{d{\rm {\mathcal L}}(x_{0} ,\tilde{l}\left(z\right))}{dQ_{U_{i} \left(\theta _{i} \right)} }} {\textstyle\frac{dQ_{U_{i} \left(\theta _{i} \right)} }{d\theta _{ij} }} $, that is, in fact equals to $\delta _{U_{i} \left(\theta _{i} \right)} W_{U_{j} \left(\theta _{j} \right)} $.

In Step 4, the quantities defined in the previous steps are utilized in the ${\rm QNN}_{QG} $ for the error calculations. The errors are evaluated and updated in a backpropagated manner from unitary ${{U}_{L}}\left( {{\theta }_{L}} \right)$ to ${{U}_{1}}\left( {{\theta }_{1}} \right)$. Since it requires only side information these steps can be achieved via a ${\rm P} ({\rm {\mathcal G}}_{{\rm QNN}_{QG} })$ post-processing (along with Step 3). First, a gate parameter modification vector $\vec{\Delta }\theta $ is defined, such that its $i$-th element, $\vec{\Delta }\theta _{i} $, is associated with the modification of the $\theta _{i} $ gate parameter of an $i$-th unitary $U_{i} \left(\theta _{i} \right)$.  

The $i$-th element $\vec{\Delta }\theta _{i} $ is initialized as $\vec{\Delta }\theta _{i} =W_{U_{i} \left(\theta _{i} \right)} $. If $\vec{\Delta }\theta _{i} $ equals to 1, then no modification is required in the $\theta _{i} $ gate parameter of $U_{i} \left(\theta _{i} \right)$. In this case, the $\delta _{U_{i} \left(\theta _{i} \right)} $ error quantity of $U_{i} \left(\theta _{i} \right)$ can be determined via a simple summation, using the children set of $U_{i} \left(\theta _{i} \right)$, as $\delta _{U_{i} \left(\theta _{i} \right)} =\sum _{j\in \Gamma \left(i\right)}\theta '_{ij} \delta _{U_{j} \left(\theta _{j} \right)}  $, where $U_{j} \left(\theta _{j} \right)$ is a children of $U_{i} \left(\theta _{i} \right)$, as it is given in \eqref{ZEqnNum414380}. On the other hand, if $\vec{\Delta }\theta _{i} \ne 1$, then the $\theta _{i} $ gate parameter of $U_{i} \left(\theta _{i} \right)$ requires a modification. In this case, summation $\sum _{j\in \Gamma \left(i\right)}\theta _{ij} \delta _{U_{j} \left(\theta _{j} \right)}  $ has to be weighted by the actual $\vec{\Delta }\theta _{i} $ to yield $\delta _{U_{i} \left(\theta _{i} \right)} $. This situation is obtained in \eqref{ZEqnNum426724}. 

According to the update mechanism of \eqref{87)}-\eqref{ZEqnNum426724}, for $z=L-1,\ldots ,1$, the errors are updated via \eqref{err} as follows.
At $z={L}$ and $\vec{\Delta }{{\theta }_{z}}= 1$, $\delta _{U_{z} \left(\theta _{z} \right)}$ is as
\begin{equation} \label{a414380} 
\delta '_{U_{z} \left(\theta _{z} \right)} ={\delta _{U_{L} \left(\theta _{L} \right)}}.
\end{equation} 
while at $\vec{\Delta }{{\theta }_{z}}\ne 1$, $\delta _{U_{z} \left(\theta _{z} \right)}$ is updated as
\begin{equation} \label{a426724} 
\delta '_{U_{z} \left(\theta _{z} \right)} = (\vec{\Delta }{{\theta }_{z}}) \delta _{U_{L} \left(\theta _{L} \right)}.
\end{equation} 
For $z={L-1,\ldots ,1}$, if $\vec{\Delta }{{\theta }_{z}}= 1$, then $\delta _{U_{z} \left(\theta _{z} \right)}$ is as
\begin{equation} \label{a414380b} 
\delta '_{U_{z} \left(\theta _{z} \right)} =\delta _{U_{z} \left(\theta _{z} \right)}=\sum _{j\in \Gamma \left(z\right)}\theta _{zj} \delta _{U_{j} \left(\theta _{j} \right)}.
\end{equation} 
while, if $\vec{\Delta }{{\theta }_{z}}\ne 1$, then
\begin{equation} \label{a426724b} 
\delta '_{U_{z} \left(\theta _{z} \right)} = (\vec{\Delta }{{\theta }_{z}})  \sum _{j\in \Gamma \left(z\right)}\theta _{zj} \delta _{U_{j} \left(\theta _{j} \right)}= \sum _{j\in \Gamma \left(z\right)}\theta '_{zj} \delta _{U_{j} \left(\theta _{j} \right)}.
\end{equation} 

In Step 5, for a given unitary $U_{i} \left(\theta _{i} \right)$, $i=2,\ldots ,L$  and for its parent $U_{j} \left(\theta _{j} \right)$, the $g_{U_{i} \left(\theta _{i} \right),U_{j} \left(\theta _{j} \right)} $ gradient is computed via the $\delta _{U_{i} \left(\theta _{i} \right)} $ error quantity derived from \eqref{ZEqnNum414380}-\eqref{ZEqnNum426724} for $U_{i} \left(\theta _{i} \right)$, and by the $W_{U_{j} \left(\theta _{j} \right)} $ quantity associated to parent $U_{j} \left(\theta _{j} \right)$. (For $U_{1} \left(\theta _{1} \right)$ the parent set $\Xi \left(1\right)$ is empty, thus $i>1$.)
The computation of $g_{U_{i} \left(\theta _{i} \right),U_{j} \left(\theta _{j} \right)} $ is performed for all $U_{j} \left(\theta _{j} \right)$ parents of $U_{i} \left(\theta _{i} \right)$, thus \eqref{90)} is determined for $\forall j$, $j\in \Xi \left(i\right)$. By the chain rule,
\begin{equation}
\begin{split}
g_{U_{i} \left(\theta _{i} \right),U_{j} \left(\theta _{j} \right)}= \delta '_{U_{i} \left(\theta _{i} \right)} W_{U_{j} \left(\theta _{j} \right)} &= \frac{dW_{U_{L} \left(\theta _{L} \right)} }{d \theta '_{ij} } \\
&= \frac{dW_{U_{L} \left(\theta _{L} \right)} }{dQ_{U_{i} \left(\theta _{i} \right)} }  \frac{dQ_{U_{i} \left(\theta _{i} \right)} }{d \theta '_{ij} }\\
&= \frac{dW_{U_{L} \left(\theta _{L} \right)} }{dQ_{U_{i} \left(\theta _{i} \right)} }  \frac{d(\sum _{h\in \Xi \left(i\right)}\theta _{hi} W _{U_{h} \left(\theta _{h} \right)}) }{d \theta '_{ij} }.
\end{split}
\end{equation}
Since for $i=L$, $\delta _{U_{L} \left(\theta _{L} \right)}$ is as given in \eqref{85)}, the gradient can be rewritten via \eqref{86)} as
\begin{equation}
\begin{split}
g_{U_{i} \left(\theta _{i} \right),U_{j} \left(\theta _{j} \right)}= \frac{d{\rm {\mathcal L}}(x_{0} ,\tilde{l}(z))}{d\theta '_{ij} }.
\end{split}
\end{equation}

Finally, Step 6 utilizes the number $R$ of measurements to extend the results for all measurement rounds, $r=1,\ldots ,R$. Note that in each round a measurement operator is applied, for simplicity it is omitted from the description. 

Since the algorithm requires no reversed quantum links, i.e. ${{{\mathcal{G}}'_{\text{QN}{{\text{N}}_{QG}}}}}$ for the computations of \eqref{ZEqnNum414380}-\eqref{ZEqnNum426724}, the gradient of the loss in \eqref{90)} with respect to the gate parameter can be determined in an optimal way for ${\rm QNN}_{QG}$ networks, by the utilization of side information in ${{{\mathcal{G}}_{\text{QN}{{\text{N}}_{QG}}}}}$.

The steps and quantities of the learning procedure (Algorithm 1) of a ${\rm QNN}_{QG} $ are illustrated in \fref{fig2}. The ${\rm QNN}_{QG} $ network realizes the unitary $U(\vec{\theta })$. The quantum information is propagated through quantum links (solid lines) between the unitaries, while the auxiliary classical information is propagated via classical links in the network (dashed lines). An $i$-th node is represented via unitary $U_{i} \left(\theta _{i} \right)$. 

For an $i$-th unitary, $U_{i} \left(\theta _{i} \right)$, parameters $W_{U_{i} \left(\theta _{i} \right)} $, $Q_{U_{i} \left(\theta _{i} \right)} $ and $\delta _{U_{i} \left(\theta _{i} \right)} ={dW_{U_{L} \left(\theta _{L} \right)}  \mathord{\left/{\vphantom{dW_{U_{L} \left(\theta _{L} \right)}  dQ_{U_{i} \left(\theta _{i} \right)} }}\right.} dQ_{U_{i} \left(\theta _{i} \right)} } $ for $i<L$, are computed, where $W_{U_{L} \left(\theta _{L} \right)} =\sum _{j\in \Xi \left(L\right)}\theta _{Lj} V_{U_{j} \left(\theta _{j} \right)}  $. For the output unitary, $\delta _{U_{L} \left(\theta _{L} \right)} ={d{\rm {\mathcal L}}(x_{0} ,\tilde{l}\left(z\right)) \mathord{/{\vphantom{d{\rm {\mathcal L}}(x_{0} ,\tilde{l}\left(z\right)) dQ_{U_{L} \left(\theta _{L} \right)} }}} dQ_{U_{L} \left(\theta _{L} \right)} } $. 
Parameters $W_{U_{i} \left(\theta _{i} \right)}$ and $Q_{U_{i} \left(\theta _{i} \right)}$ are determined via forward propagation of side information, the $\delta _{U_{i} \left(\theta _{i} \right)}$ quantities are evaluated via backward propagation of side information. Finally, the gradients, $g_{U_{i} \left(\theta _{i} \right),U_{j} \left(\theta _{j} \right)} =\delta _{U_{i} \left(\theta _{i} \right)} W_{U_{j} \left(\theta _{j} \right)}, $ are computed.

 \begin{center}
\begin{figure*}[h!]
\begin{center}
\includegraphics[angle = 0,width=1\linewidth]{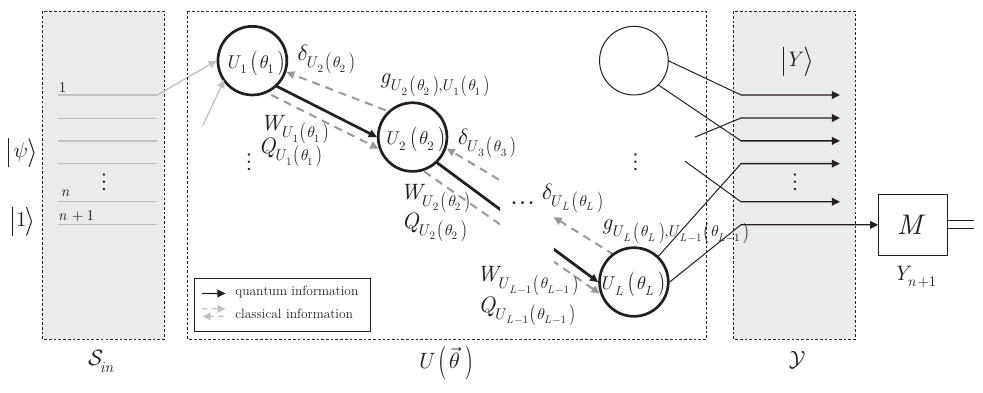}
\caption{The learning method for a ${\rm QNN}_{QG} $. The ${\rm QNN}_{QG} $ network realizes unitary $U(\vec{\theta })$ as a sequence of $L$ unitaries, $U(\vec{\theta })=U_{L} \left(\theta _{L} \right)U_{L-1} \left(\theta _{L-1} \right)\ldots U_{1} \left(\theta _{1} \right)$. The algorithm determines the gradient of the loss with respect to the $\theta $ gate parameter, at a particular loss function ${\rm {\mathcal L}}(x_{0} ,\tilde{l}\left(z\right))$. All quantum information propagates forward via quantum links (solid lines), classical side information can propagate arbitrarily (dashed lines).} 
 \label{fig2}
 \end{center}
\end{figure*}
\end{center}

\subsection{Recurrent Gate-Model Quantum Neural Network}
In classical neural networks, backpropagation \cite{ref24,ref25,ref26} (backward propagation of errors) is a supervised learning method that allows to determine the gradients to learn the weights in the network. In this section, we show that for a recurrent gate-model QNN, a backpropagation method is optimal.
\begin{theorem}
A backpropagation in ${\rm {\mathcal G}}_{{\rm RQNN}_{QG} } $ is an optimal learning in the sense of gradient descent. 
\end{theorem}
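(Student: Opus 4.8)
The plan is to combine the diffusion-machine characterization of ${\rm RQNN}_{QG} $ from Theorem~2 with the backward-only flow of classical side information to show that the backward chain-rule expansion of the objective gradient is the unique computation realizable in ${\rm {\mathcal G}}_{{\rm RQNN}_{QG} } $ and that it returns the exact steepest-descent direction. First I would fix ${\rm {\mathcal G}}_{{\rm RQNN}_{QG} } =\left(V,S\right)$ with the topological order of the $L$ unitaries and write the objective $f(\vec{\theta})$ of \eqref{ZEqnNum830871} as a composition along that order, so that the sensitivity of $f$ with respect to any intermediate state $x_{\nu}$ factorizes as in \eqref{57)}, namely $\frac{df(\theta)}{dx_{\nu}}=\frac{df(\theta)}{dx_{|\Gamma(v)|}}\prod_{k=\nu}^{|\Gamma(v)|-1}D_{k+1}\left(U(\vec{\theta})\right)^{T}$ with $D_{k+1}=\mathrm{diag}(Z_{k+1})$ from \eqref{58)}. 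The key point is that each factor in that product depends only on data local to one unitary and on the sensitivity already accumulated at its successor, so the whole product is assembled by a single backward sweep $U_{L}\to U_{1}$, which is exactly the pattern that the backward-only side-information links of an ${\rm RQNN}_{QG} $ permit.

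Next I would identify this backward sweep with the diffuse constraint ${\mathchar'26\mkern-10mu\lambda}\left(Q(x)\right)$ of \eqref{ZEqnNum914056}: since ${\rm {\mathcal C}}\left({\rm RQNN}_{QG} \right)={\rm {\mathcal D}}\left({\rm RQNN}_{QG} \right)$ by \eqref{46)}, enforcing the constraint over all $v\in V$ amounts to propagating the state variables $\gamma_{v}$ --- here the per-unitary error and sensitivity quantities, seeded using the side information about the previous running sequence --- backward through the graph, and I would check that the per-parameter quantity produced this way equals $\frac{df(\theta)}{d\theta_{ij}}$. Consequently the update $\vec{\theta}\mapsto\vec{\theta}-\eta\,\nabla_{\vec{\theta}}f$ computed by the sweep is the true gradient-descent step, whereas any learning rule confined to the backward side-information structure of ${\rm {\mathcal G}}_{{\rm RQNN}_{QG} } $ that fails to traverse the full product cannot reproduce $\nabla_{\vec{\theta}}f$ and hence cannot produce a direction with a larger first-order decrease of $f$; this is the precise sense in which backpropagation is optimal. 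Because the transition function of ${\rm RQNN}_{QG} $ is linear (Theorem~2), the activation $f_{\sigma}^{{\rm RQNN}_{QG}}$ of \eqref{ZEqnNum368075} reduces to \eqref{54)}, so no nonlinearity breaks the exactness of the chain rule, and since the sweep touches each arc once its cost is $\mathcal{O}(|S|)$, in line with Lemma~1.

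The main obstacle I anticipate is the norm behaviour recorded in \eqref{59)}--\eqref{60)}: the product $\prod_{k}\left\|D_{k+1}\left(U(\vec{\theta})\right)^{T}\right\|=\prod_{k}\left\|D_{k+1}\right\|$ can amplify or attenuate the backpropagated signal along the recurrence, so the statement has to be read as optimality of the \emph{descent direction} computed at each step rather than of any convergence rate, and the argument must be phrased so that the identity relating $\nabla_{\vec{\theta}}f$ to the output of the backward sweep holds exactly while the conditioning of $\prod_{k}D_{k+1}$ is left governed by the spectral data of $U(\vec{\theta})$. A secondary point to handle carefully is that in ${\rm RQNN}_{QG} $ classical side information may travel only backward, so one must verify that every quantity the sweep needs at step $k$ --- in particular the $W$- and $Q$-type aggregates built from \eqref{ZEqnNum577141} and the Jacobians $D_{k+1}$ --- is either local to $U_{k}$ or has already been passed down from $U_{k+1},\dots,U_{L}$; this is exactly where the hypothesis that the previous running sequence is available carries the argument.
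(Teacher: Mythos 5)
There is a genuine gap: your chain-rule expansion runs along the wrong axis. You unroll the gradient over the $L$ unitaries \emph{within a single run} via \eqref{57)}--\eqref{58)}, i.e., the product $\prod_{k=\nu}^{|\Gamma(v)|-1}D_{k+1}(U(\vec{\theta}))^{T}$, and then argue that a single backward sweep $U_{L}\to U_{1}$ assembles it. But that is precisely the layer-wise backpropagation already established for the \emph{nonrecurrent} ${\rm QNN}_{QG}$ (Lemma~1 / Algorithm~1), and it does not use the defining hypothesis of the ${\rm RQNN}_{QG}$ --- that side information about the \emph{previous running sequences} is available. The paper's proof of this theorem is constructive and hinges on unrolling the recurrence over \emph{measurement rounds}: the forward quantity $\Phi_{r}=z^{(r)}+U(\vec{\theta}_{r-1})+B_{r}$ in \eqref{ZEqnNum636867} couples round $r$ to round $r-1$, the chain-rule product $\xi_{r,k}=\prod_{i=k+1}^{r}d\Phi_{i}/d\Phi_{i-1}$ in \eqref{ZEqnNum809132} runs over rounds $k<r$, the round-$r$ gradient \eqref{ZEqnNum677383} is a sum over all $k=1,\ldots,r$, and the update \eqref{ZEqnNum920400}--\eqref{ZEqnNum848371} averages these gradients. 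That temporal unrolling (backpropagation through time) is the content of the theorem; your proposal mentions the previous-sequence side information only as a ``seeding'' of a within-round sweep, so the quantity you compute is $d\mathcal{L}/d\theta_{ij}$ for one isolated run, not the gradient of the round-$r$ loss through its dependence on the history $U(\vec{\theta}_{1}),\ldots,U(\vec{\theta}_{r-1})$.

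Concretely, the step that fails is the identification of your backward sweep's output with $\nabla_{\vec{\theta}}f$ for the recurrent network: without the factors $\xi_{r,k}$ there is no mechanism by which the derivative of the round-$r$ objective with respect to the shared/evolving gate parameters picks up the contributions routed through earlier rounds, so the ``exact steepest-descent direction'' claim is only established for the degenerate case $r=1$ (equivalently, for a ${\rm QNN}_{QG}$). Your secondary observations --- linearity of $f_{\sigma}^{{\rm RQNN}_{QG}}$ preserving exactness, the $\mathcal{O}(|S|)$ cost per sweep, and the conditioning caveat from \eqref{59)}--\eqref{60)} --- are all correct and compatible with the paper, but they decorate the wrong decomposition. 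To repair the argument you would need to compose your within-round sweep with the cross-round recursion of \eqref{ZEqnNum636867}--\eqref{ZEqnNum809132}, which is exactly what Algorithm~2 does.
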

\begin{proof}
In an ${\rm RQNN}_{QG} $, the backward classical links provide feedback side information for the forward propagation of quantum information in multiple measurement rounds. The backpropagated side information is analogous to feedback loops, i.e, to recurrent cycles over time. The aim of the learning method is to optimize the gate parameters of the unitaries of the ${\rm RQNN}_{QG} $ quantum network via a supervised learning, using the side information available from the previous $k=1,\ldots ,r-1$ measurement rounds at a particular measurement round $r$. 

Let ${\rm {\mathcal G}}_{{\rm RQNN}_{QG} } $ be the environmental graph of ${\rm RQNN}_{QG} $, and $f_{T} \left({\rm RQNN}_{QG} \right)$ be the transition function of an ${\rm RQNN}_{QG} $. Then the $\gamma _{v} $ constraint is defined via ${\rm {\mathcal G}}_{{\rm RQNN}_{QG} } $ as
\begin{equation} \label{91)} 
{\left| \gamma _{v}  \right\rangle} =f_{T} \left({\rm RQNN}_{QG} \right)=f_{T} \left(\gamma _{\Gamma \left(v\right)} ,x_{v} \right),                                                 
\end{equation} 
while the constraint $\Omega _{v} $ on the output $F\left(\gamma _{v} ,x_{v} \right)$ of ${\rm RQNN}_{QG} $ is defined via $\omega _{v} =0$ as \cite{ref23, ref26,ref27}
\begin{equation} \label{92)} 
\omega _{v} :\Omega _{v} F\left(f_{T} \left({\rm RQNN}_{QG} \right),x_{v} \right)=\Omega _{v} \circ F\left(f_{T} \left(\gamma _{\Gamma \left(v\right)} ,x_{v} \right),x_{v} \right)=0.                    
\end{equation} 
Utilizing the structure of the ${\rm {\mathcal G}}_{{\rm RQNN}_{QG} } $ environmental graph allows us to define a modified version of the backpropagation through time algorithm \cite{ref24} to the ${\rm RQNN}_{QG} $. 

The learning of ${\rm {\mathcal D}}\left({\rm RQNN}_{QG} \right)$ with constraints \eqref{ZEqnNum411472}, \eqref{ZEqnNum937370}, and \eqref{ZEqnNum914056} is given in Algorithm 2, depicted as ${\rm {\mathcal A}}_{{\rm {\mathcal D}}\left({\rm RQNN}_{QG} \right)} $. 

\setcounter{algocf}{1}
\begin{algo}
\small
  \DontPrintSemicolon
\caption{Optimal learning method for a ${\rm {\mathcal D}}\left({\rm RQNN}_{QG} \right)$}

\textbf{Step 1}. (Parameter initialization.) Set the number $R$ of measurement rounds.

For an $r$-th measurement round, $r=1,\ldots ,R$, let 
\begin{equation} \label{ZEqnNum398970} 
{\left| \psi _{r}  \right\rangle} ={| z^{\left(r\right)} \rangle}  
\end{equation} 
be the input quantum system of  ${\rm RQNN}_{QG} $, where $z^{\left(r\right)} $ is an $n$-length string, as 
\begin{equation} \label{ZEqnNum193600} 
z^{\left(r\right)} =z_{r,1} z_{r,2} \ldots z_{r,n} .                                                              
\end{equation}

\textbf{Step 2}. (Quantum evolution phase.) Evaluate the output system ${\left| Y_{r}  \right\rangle} $ of ${\rm RQNN}_{QG} $ as
\begin{equation} \label{ZEqnNum396843} 
{\left| Y_{r}  \right\rangle} =U(\vec{\theta }_{r}){\left| \psi _{r}  \right\rangle} {\left| \varphi _{r}  \right\rangle} =U(\vec{\theta }_{r}){\left| \psi _{r}  \right\rangle} {\left| 1 \right\rangle} =U(\vec{\theta }_{r}){| z^{\left(r\right)} ,1 \rangle} ,                            
\end{equation} 
where $\vec{\theta }_{r} $ is the gate-parameter vector associated to the $L$ unitaries of ${\rm RQNN}_{QG} $ as
\begin{equation} \label{4)} 
\vec{\theta }_{r} =\left(\theta _{r,1} ,\ldots ,\theta _{r,L-1} ,\theta _{r,L} \right)^{T} ,                                                          
\end{equation} 
and an $r$-th unitary sequence is as
\begin{equation} \label{a5)} 
U(\vec{\theta }_{r})=U_{L} \left(\theta _{r,L} \right)U_{L-1} \left(\theta _{r,L-1} \right)\ldots U_{1} \left(\theta _{r,1} \right),                                        
\end{equation} 
where $U_{i} \left(\theta _{r,i} \right)$ is the $i$-th unitary of $U(\vec{\theta }_{r})$.

\textbf{Step 3}. (Post-processing initialization.) For a given $r$, initialize the  ${\rm P}^{\left(r\right)} \left({\rm RQNN}_{QG} \right)$ post-processing as follows. 
Define set ${\rm {\mathcal S}}(\vec{\theta }_{r})$ as
\begin{equation} \label{ZEqnNum522731} 
{\rm {\mathcal S}}(\vec{\theta }_{r})=\{\vec{\theta }_{r} ,B_{r}\},   
\end{equation} 
where $B_{r} $ is a bias.

Define $\Phi _{r} $ as
\begin{equation} \label{ZEqnNum636867} 
\Phi _{r} =z^{\left(r\right)} +U(\vec{\theta }_{r-1})+B_{r} ,                                                          
\end{equation} 
where $U(\vec{\theta }_{r-1})$ is the unitary sequence $U_{L} \left(\theta _{r-1,L} \right)U_{L-1} \left(\theta _{r-1,L-1} \right)\ldots U_{1} \left(\theta _{r-1,1} \right)$, of the $\left(r-1\right)$-th round.

Using \eqref{ZEqnNum636867}, evaluate quantity $\xi _{r,k} $ as
\begin{equation} \label{ZEqnNum809132} 
\xi _{r,k} =\frac{d\Phi _{r} }{d\Phi _{k} } =\prod _{i=k+1}^{r}\frac{d\Phi _{i} }{d\Phi _{i-1} },   
\end{equation} 
where $\Phi _{k} $ belongs to the $k$-th measurement round, $k<r$.

\end{algo}

\setcounter{algocf}{1}
\begin{algo}
\small
  \DontPrintSemicolon
\caption{Optimal learning method for a ${\rm {\mathcal D}}\left({\rm RQNN}_{QG} \right)$ (cont.)}

\textbf{Step 4}. (Gradient computations). Using \eqref{ZEqnNum809132}, compute the $g_{r} $ loss function gradient of the $r$-th round as
\begin{equation} \label{ZEqnNum677383} 
g_{r} =\frac{{\rm {\mathcal L}}(x_{0,r} ,\tilde{l}(z^{\left(r\right)}))}{d{\rm {\mathcal S}}(\vec{\theta }_{r})} =\sum _{k=1}^{r}\frac{{\rm {\mathcal L}}(x_{0,r} ,\tilde{l}(z^{\left(r\right)}))}{d\Phi _{r} }  \xi _{r,k} \frac{d\tilde{\Phi }_{k} }{d{\rm {\mathcal S}}(\vec{\theta }_{r})} ,                                      
\end{equation} 
where ${\rm {\mathcal L}}(x_{0,r} ,\tilde{l}(z^{\left(r\right)}))$ is the loss function of the $r$-th round,
\begin{equation} 
{\rm {\mathcal L}}(x_{0,r} ,\tilde{l}(z^{\left(r\right)}))=1-l\left(z^{\left(r\right)}\right)\tilde{l}\left(z^{\left(r\right)}\right), 
\end{equation}
where $x_{0,r} ={\left| z^{\left(r\right)} ,1 \right\rangle} $ identifies the input system of ${\rm RQNN}_{QG} $ in the $r$-th round, and $\tilde{l}\left(z^{\left(r\right)} \right)$ is as
\begin{equation} \label{ZEqnNum648611} 
\tilde{l}\left(z^{\left(r\right)} \right)={\langle z^{\left(r\right)} ,1 \mathrel{| \vphantom{z^{\left(r\right)} ,1 (U(\vec{\theta }_{r}))^{\dag } Y_{n+1}^{(r)} U(\vec{\theta }_{r})|z^{\left(r\right)} ,1} \kern-\nulldelimiterspace} (U(\vec{\theta }_{r}))^{\dag } Y_{n+1}^{(r)} U(\vec{\theta }_{r})|z^{\left(r\right)} ,1 \rangle} ,                                        
\end{equation} 
where $\tilde{l}\left(z^{\left(r\right)} \right)$ is the predicted value of the binary label $l\left(z^{\left(r\right)} \right)\in \left\{-1,1\right\}$ of string $z^{\left(r\right)} $, $Y_{n+1}^{\left(r\right)} \in \left\{-1,1\right\}$ is a measured Pauli operator of $r$-th round, while ${\textstyle\frac{d\tilde{\Phi }_{k} }{d{\rm {\mathcal S}}(\vec{\theta }_{r})}} $ is a partial derivative. 

\textbf{Step 5}. (Gate parameter updates). If $r<R$, update gate parameter vector $\vec{\theta }_{r+1} $ via backpropagated side information as 
\begin{equation} \label{ZEqnNum920400} 
\vec{\theta }_{r+1} =\vec{\theta }_{r} -\omega _{r}, 
\end{equation} 
where $\omega _{r}$ is defined as
\begin{equation} \label{ZEqnNum848371} 
\omega _{r} =\frac{\lambda }{r} \sum _{k=1}^{r}g_{k}  ,                                                         
\end{equation} 
where $g_{r} $ is the gradient \eqref{ZEqnNum677383} evaluated in the $k$-th measurement round,
\begin{equation} \label{13)} 
g_{k} =\frac{{\rm {\mathcal L}}(x_{0,k} ,\tilde{l}(z^{\left(k\right)}))}{d{\rm {\mathcal S}}(\vec{\theta }_{k})} ,                                                       
\end{equation} 
while $\lambda $ is the learning rate. 

Set the gate-parameter modification vector $\vec{\alpha }_{r} $ as
\begin{equation} \label{ZEqnNum174287} 
\vec{\alpha }_{r} =\left(\alpha _{r,1} ,\ldots ,\alpha _{r,L} \right)^{T} ,                                                   
\end{equation} 
where $\alpha _{r,i} $ is associated to the modification of the gate parameter $\theta _{r,i} $ of the $i$-th unitary $U_{i} \left(\theta _{r,i} \right)$ as
\begin{equation} \label{15)} 
\alpha _{r,i} =\omega _{r} .                                                            
\end{equation} 

\textbf{Step 6}. (Output gradient). Apply steps 1-5 for all $r$.

Output the $G$ final gradient of the $R$ rounds via the summation of the $R$ gradients as
\begin{equation} \label{ZEqnNum955196} 
G=\sum _{r=1}^{R}g_{r},
\end{equation} 
where $g_{r}$ is given in \eqref{ZEqnNum677383}.
\end{algo}

As a corollary, the training of ${\rm {\mathcal D}}\left({\rm RQNN}_{QG} \right)$ can be reduced to a backpropagation method via the environmental graph of ${\rm RQNN}_{QG} $.

\end{proof}

\subsubsection{Description and Method Validation}
The detailed steps and validation of Algorithm 2 are as follows.

In Step 1, the number $R$ of measurement rounds are set for ${\rm RQNN}_{QG} $. For each measurement round initialization steps \eqref{ZEqnNum398970}-\eqref{ZEqnNum193600} are set.

Step 2 provides the quantum evolution phase of ${\rm RQNN}_{QG} $, and produces output quantum system ${\left| Y_{r}  \right\rangle} $ \eqref{ZEqnNum396843} via forward propagation of quantum information through the unitary sequence $U(\vec{\theta }_{r})$ of the $L$ unitaries.

Step 3 initializes the ${\rm P}^{\left(r\right)} \left({\rm RQNN}_{QG} \right)$ post-processing method via the definition of \eqref{ZEqnNum522731} for gradient computations. In \eqref{ZEqnNum636867}, the quantity $\Phi _{r} =z^{\left(r\right)} +U(\vec{\theta }_{r-1})+B_{r} $ connects the side information of the $r$-th measurement round with the side information of the $\left(r-1\right)$-th measurement round; and $U(\vec{\theta }_{r-1})$ is the unitary sequence of the $\left(r-1\right)$-th round, and $B_{r} $ is a bias the current measurement round. 
The quantity $\xi _{r,k} ={d\Phi _{r}  \mathord{\left/{\vphantom{d\Phi _{r}  d\Phi _{k} }}\right.\kern-\nulldelimiterspace} d\Phi _{k} } $ in \eqref{ZEqnNum809132} utilizes the $\Phi _{i} $ quantities (see \eqref{ZEqnNum636867}) of the $i$-th measurement rounds, such that $i=k+1,\ldots ,r$, where $k<r$.

Step 4 determines the $g_{r} $ loss function gradient of the $r$-th measurement round. In \eqref{ZEqnNum677383}, the $g_{r} $ gradient is determined as $\sum _{k=1}^{r}{\textstyle\frac{{\rm {\mathcal L}}(x_{0,r} ,\tilde{l}(z^{\left(r\right)}))}{d\Phi _{r} }} {\textstyle\frac{d\Phi _{r} }{d\Phi _{k} }} {\textstyle\frac{d\tilde{\Phi }_{k} }{d{\rm {\mathcal S}}(\vec{\theta }_{r})}}  $, that is, via the utilization of the side information of the $k=1,\ldots ,r$ measurement rounds at a particular $r$.

In Step 5, the gate parameters are updated via the gradient descent rule \cite{ref24}, by utilizing the gradients of the $k=1,\ldots ,r$ measurement rounds at a particular $r$. Since in \eqref{ZEqnNum920400} all the gate parameters of the $L$ unitaries are updated by  $\omega _{r} $ as given in \eqref{ZEqnNum848371}, for a particular unitary $U_{i} \left(\theta _{r,i} \right)$, the gate parameter is updated via $\vec{\alpha }_{r} $ \eqref{ZEqnNum174287} to $\theta _{r+1,i}$ as
\begin{equation} \label{17)} 
\theta _{r+1,i} =\theta _{r,i} -\alpha _{r,i} =\theta _{r,i} -\omega _{r} . 
\end{equation}

Finally, Step 6 outputs the $G$ final gradient of the total $R$ measurement rounds in \eqref{ZEqnNum955196}, as a summation of the $g_{r} $ gradients \eqref{ZEqnNum677383} determined in the $r=1,\ldots ,R$ rounds.

The steps of the learning method of an ${\rm RQNN}_{QG} $ (Algorithm 2) are illustrated in \fref{fig3}. The $\vec{\theta }_{r} $ gate parameters of the unitaries of unitary sequence $U(\vec{\theta }_{r})$ are set as $\vec{\theta }_{r} =\vec{\theta }_{r-1} -\omega _{r-1} ,$ where $\vec{\theta }_{r-1} $ is the gate parameter vector associated to sequence $U(\vec{\theta }_{r-1})$, while $\alpha _{r-1,i} =\omega _{r-1} $ is the gate parameter modification coefficient, and $\omega _{r-1} ={\textstyle\frac{\lambda }{r-1}} \sum _{k=1}^{r-1}{\textstyle\frac{{\rm {\mathcal L}}(x_{0,k} ,\tilde{l}(z^{\left(k\right)}))}{d{\rm {\mathcal S}}(\vec{\theta }_{k})}}  $.

 \begin{center}
\begin{figure*}[h!]
\begin{center}
\includegraphics[angle = 0,width=1\linewidth]{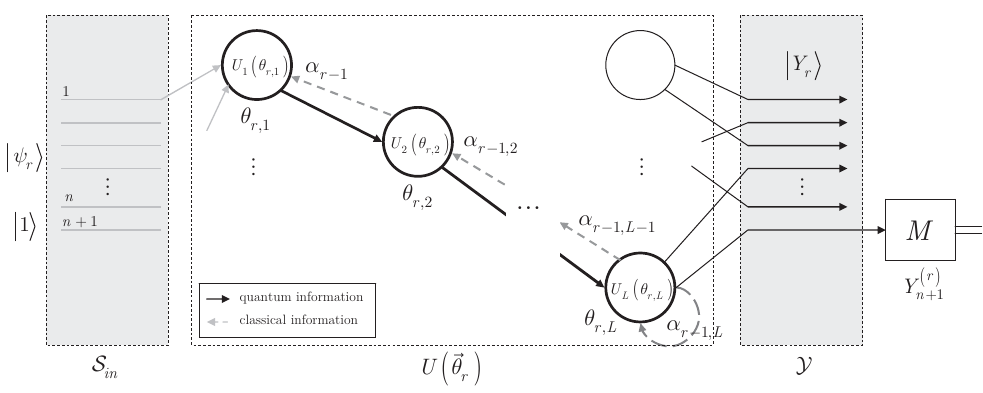}
\caption{The learning method for an ${\rm RQNN}_{QG} $. In an $r$-th measurement round, the ${\rm RQNN}_{QG} $ network realizes the unitary sequence $U(\vec{\theta }_{r})$, and side information is available about the previous $k=1,\ldots ,r-1$ running sequences of the structure.
 Quantum information propagates only forward in the network via quantum links (solid lines), the $\alpha _{r-1,i} =\omega _{r-1} $ quantities are distributed via backpropagation of side information through the classical links (dashed lines). The $\theta _{r,i} $ gate parameter of an $i$-th unitary $U_{i} \left(\theta _{r,i} \right)$ of $U(\vec{\theta }_{r})$ is set to $\theta _{r,i} =\theta _{r-1,i} -\alpha _{r-1,i} ,$ where $\theta _{r-1,i} $ is the gate parameter of the $i$-th unitary $U_{i} \left(\theta _{r-1,i} \right)$ of the $U(\vec{\theta }_{r-1})$ unitary sequence.} 
 \label{fig3}
 \end{center}
\end{figure*}
\end{center}

\subsubsection{Closed-Form Error Evaluation}
\begin{lemma}
The $\delta $ quantity of the unitaries of a ${\rm {\mathcal D}}\left({\rm RQNN}_{QG} \right)$ can be expressed in a closed form via the ${\rm {\mathcal G}}_{{\rm RQNN}_{QG} } $ environmental graph of ${\rm RQNN}_{QG} $. 
\end{lemma}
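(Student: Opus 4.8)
The plan is to obtain $\delta$ for the recurrent network by composing the within-round backpropagation already established for ${\rm QNN}_{QG}$ (Algorithm~1) with the cross-round feedback quantities $\xi_{r,k}$ introduced in Algorithm~2. First I would fix a measurement round $r$ and a unitary $U_{i}\left(\theta_{r,i}\right)$ and define the associated error exactly as in \eqref{ZEqnNum919206}--\eqref{85)}, namely $\delta_{U_{i}\left(\theta_{r,i}\right)}=d{\rm {\mathcal L}}(x_{0,r},\tilde{l}(z^{(r)}))/dQ_{U_{i}\left(\theta_{r,i}\right)}$, but now read off the environmental graph ${\rm {\mathcal G}}_{{\rm RQNN}_{QG}}$, whose arcs additionally carry the backward side information linking successive rounds (the dashed arrows of \fref{figA1}(b)).

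Second, I would use the linearity of the transition function $f_{T}\left({\rm RQNN}_{QG}\right)$ established in the derivation of the ${\rm {\mathcal D}}\left({\rm RQNN}_{QG}\right)$ computational model: since $f_{\sigma}^{{\rm RQNN}_{QG}}$ of \eqref{ZEqnNum368075} acts linearly on its active branch, the chain rule along the topologically ordered unitaries $f_{\angle}({\rm {\mathcal G}}_{{\rm RQNN}_{QG}})$ telescopes into a finite product, and the per-step Jacobians collapse to $D_{k+1}\left(U(\vec{\theta}_{r})\right)^{T}$ with $D_{k+1}=diag\left(Z_{k+1}\right)$, exactly as in \eqref{57)}--\eqref{58)}. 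Hence the intra-round part of $\delta_{U_{i}\left(\theta_{r,i}\right)}$ is $\frac{d{\rm {\mathcal L}}(x_{0,r},\tilde{l}(z^{(r)}))}{dQ_{U_{L}\left(\theta_{r,L}\right)}}\prod_{j=i}^{L-1}D_{j+1}\left(U(\vec{\theta}_{r})\right)^{T}$, the empty product at $i=L$ recovering \eqref{ZEqnNum434530}.

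Third, I would attach the recurrence. Because in an ${\rm RQNN}_{QG}$ side information about the rounds $k=1,\ldots,r-1$ is available, the $r$-th state depends on the earlier rounds through $\Phi_{r}=z^{(r)}+U(\vec{\theta}_{r-1})+B_{r}$ of \eqref{ZEqnNum636867}, and the sensitivity of round $r$ to round $k$ is the factor $\xi_{r,k}=d\Phi_{r}/d\Phi_{k}$ of \eqref{ZEqnNum809132}. Summing the contributions over all $k\le r$ in the same manner as the gradient \eqref{ZEqnNum677383} gives the closed form
\begin{equation}
\delta_{U_{i}\left(\theta_{r,i}\right)}=\sum_{k=1}^{r}\frac{d{\rm {\mathcal L}}(x_{0,r},\tilde{l}(z^{(r)}))}{d\Phi_{r}}\,\xi_{r,k}\left(\prod_{j=i}^{L-1}D_{j+1}\left(U(\vec{\theta}_{k})\right)^{T}\right)\frac{d\tilde{\Phi}_{k}}{d{\rm {\mathcal S}}(\vec{\theta}_{r})},
\end{equation}
which is written entirely in terms of quantities available on ${\rm {\mathcal G}}_{{\rm RQNN}_{QG}}$: the arc parameters $\theta_{r,j}$, the diagonal Jacobians $D_{j+1}$ at the vertices, and the feedback factors $\xi_{r,k}$ between rounds. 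Well-definedness of the expression follows from the norm identity $\left\|D_{k+1}\left(U(\vec{\theta})\right)^{T}\right\|=\left\|D_{k+1}\right\|$ of \eqref{60)} together with the bound \eqref{59)}, which controls every factor.

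The main obstacle I anticipate is the simultaneous bookkeeping of two orderings: the spatial topological order $f_{\angle}$ of the $L$ unitaries inside one round, and the temporal order of the $R$ measurement rounds. I would establish the formula by a double induction --- on $r$, using $\xi_{r,k}=\xi_{r-1,k}\left(d\Phi_{r}/d\Phi_{r-1}\right)$, and, for fixed $r$, on the unitary index $i$ descending from $L$ --- so that each inductive step appeals only to the linearity of $f_{T}\left({\rm RQNN}_{QG}\right)$ and to the within-round recursion \eqref{r1}, and no directed path in ${\rm {\mathcal G}}_{{\rm RQNN}_{QG}}$ is counted twice.
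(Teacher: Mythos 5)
Your proposal diverges substantially from what the paper actually proves here, and the closed form you write down does not hold together as a chain-rule identity. In the paper, the ``$\delta$ quantity'' of this lemma is resolved through the Hessian matrix $\mathbf{H}_{{\rm RQNN}_{QG}}$ of the loss: the generic coordinate $\hbar_{ij,lm}^{{\rm RQNN}_{QG}}=d^{2}{\rm {\mathcal L}}/d\theta_{ij}d\theta_{lm}$ is expanded over output realizations in \eqref{98)}, which forces the introduction of the \emph{square} error $\bigl(\delta_{U_{l}(\theta_{l}),U_{i}(\theta_{i})}^{Q}\bigr)^{2}=d^{2}W_{U_{Y}(\theta_{Y})}/dQ_{U_{l}(\theta_{l})}dQ_{U_{i}(\theta_{i})}$; the closed form is the recursion \eqref{ZEqnNum391405} over the children set $\Gamma(i)$, and the punchline tying it to the environmental graph is \eqref{100)}: the square error is nonzero if and only if the arc $s_{il}$ is present in ${\rm {\mathcal G}}_{{\rm RQNN}_{QG}}$. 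Your argument never touches second-order quantities, the Hessian, or the edge-dependence property, so even if your construction were sound it would establish a different statement (a closed form for the first-order error only) rather than the one the paper's proof supplies.

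Independently of that mismatch, your displayed formula has a concrete composition error. Each summand strings together $\frac{d{\rm {\mathcal L}}}{d\Phi_{r}}\cdot\xi_{r,k}\cdot\frac{d\tilde{\Phi}_{k}}{d{\rm {\mathcal S}}(\vec{\theta}_{r})}$, which is already a complete chain from ${\rm {\mathcal L}}$ through $\Phi_{r}$ and $\Phi_{k}$ down to the parameter set ${\rm {\mathcal S}}(\vec{\theta}_{r})$ --- this is exactly the gradient $g_{r}$ of \eqref{ZEqnNum677383}, a derivative with respect to ${\rm {\mathcal S}}(\vec{\theta}_{r})$, not with respect to $Q_{U_{i}(\theta_{r,i})}$. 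Inserting the intra-round Jacobian product $\prod_{j=i}^{L-1}D_{j+1}(U(\vec{\theta}_{k}))^{T}$ in the middle of that chain does not convert it into $d{\rm {\mathcal L}}/dQ_{U_{i}(\theta_{r,i})}$; the intermediate variables on either side of the inserted product do not match, so the factors do not compose, and the intra-round propagation is effectively counted twice (once inside $\xi_{r,k}$ through $U(\vec{\theta}_{k-1})$ in \eqref{ZEqnNum636867}, and once in the explicit Jacobian product). The norm bound \eqref{59)} controls magnitudes but cannot repair a malformed chain rule. To salvage your route you would need to fix a single terminal variable ($Q_{U_{i}(\theta_{r,i})}$), write every factor as a derivative of one intermediate quantity with respect to the next along an actual directed path of ${\rm {\mathcal G}}_{{\rm RQNN}_{QG}}$, and only then sum over rounds; alternatively, follow the paper and work at second order, where the graph structure enters cleanly through \eqref{100)}.
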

\begin{proof}
Let ${\rm {\mathcal G}}_{{\rm RQNN}_{QG} } $ be the environmental graph of ${\rm RQNN}_{QG} $, such that ${\rm RQNN}_{QG} $ is characterized via $\vec{\theta }$ (see \eqref{ZEqnNum837426}). Utilizing the structure ${\rm {\mathcal G}}_{{\rm RQNN}_{QG} } $ of ${\rm RQNN}_{QG} $ allows us to express the square error in a closed form as follows.

Let $Y$ and $Z$ refer to output realizations ${\left| Y \right\rangle} $ and ${\left| Z \right\rangle} $ of ${\rm RQNN}_{QG} $, ${\rm {\mathcal Y}}\in Y,Z$, with an output set ${\rm {\mathcal Y}}$, and let ${\rm {\mathcal L}}(x_{0} ,\tilde{l}(z))$ be the loss function. Then let $\mathbf{H}_{{\rm RQNN}_{QG} } $ be a Hessian matrix \cite{ref23} of the ${\rm RQNN}_{QG} $ structure, with a generic coordinate $\hbar _{ij,lm}^{{\rm RQNN}_{QG} } $, as
\begin{equation} \label{98)} 
\begin{split}
  \hbar _{ij,lm}^{\text{RQN}{{\text{N}}_{QG}}}=&\frac{{{d}^{2}}\mathcal{L}( {{x}_{0}},\tilde{l}\left( z \right))}{d{{\theta }_{ij}}d{{\theta }_{lm}}} \\ 
  =& \frac{d}{d{{\theta }_{ij}}}\sum\limits_{Y\in \mathcal{Y}}{\frac{d\mathcal{L}( {{x}_{0}},\tilde{l}\left( z \right))}{d{{W}_{{{U}_{Y}}\left( {{\theta }_{Y}} \right)}}}}\frac{d{{W}_{{{U}_{Y}}\left( {{\theta }_{Y}} \right)}}}{d{{\theta }_{lm}}} \\ 
  =& \sum\limits_{Y\in \mathcal{Y}}{\sum\limits_{Z\in \mathcal{Y}}{\frac{{{d}^{2}}\mathcal{L}( {{x}_{0}},\tilde{l}\left( z \right))}{d{{W}_{{{U}_{Y}}\left( {{\theta }_{Y}} \right)}}d{{W}_{{{U}_{Z}}\left( {{\theta }_{Z}} \right)}}}\delta _{{{U}_{i}}\left( {{\theta }_{i}} \right)}^{Z}\delta _{{{U}_{l}}\left( {{\theta }_{l}} \right)}^{Y}{{W}_{{{U}_{j}}\left( {{\theta }_{j}} \right)}}{{W}_{{{U}_{m}}\left( {{\theta }_{m}} \right)}}}} \\ 
 & +\sum\limits_{Y\in \mathcal{Y}}{\frac{d\mathcal{L}( {{x}_{0}},\tilde{l}\left( z \right))}{d{{W}_{{{U}_{Y}}\left( {{\theta }_{Y}} \right)}}}\left( {{W}_{{{U}_{m}}\left( {{\theta }_{m}} \right)}}\frac{d\delta _{{{U}_{l}}\left( {{\theta }_{l}} \right)}^{Y}}{d{{\theta }_{ij}}}+\delta _{{{U}_{l}}\left( {{\theta }_{l}} \right)}^{Y}\frac{d{{W}_{{{U}_{m}}\left( {{\theta }_{m}} \right)}}}{d{{\theta }_{ij}}} \right)} \\ 
   = & \sum\limits_{Y\in \mathcal{Y}}{\sum\limits_{Z\in \mathcal{Y}}{\frac{{{d}^{2}}\mathcal{L}( {{x}_{0}},\tilde{l}\left( z \right))}{d{{W}_{{{U}_{Y}}\left( {{\theta }_{Y}} \right)}}d{{W}_{{{U}_{Z}}\left( {{\theta }_{Z}} \right)}}}\delta _{{{U}_{i}}\left( {{\theta }_{i}} \right)}^{Z}\delta _{{{U}_{l}}\left( {{\theta }_{l}} \right)}^{Y}{{W}_{{{U}_{j}}\left( {{\theta }_{j}} \right)}}{{W}_{{{U}_{m}}\left( {{\theta }_{m}} \right)}}}} \\ 
  & +  \sum\limits_{Y\in \mathcal{Y}}{\frac{d\mathcal{L}( {{x}_{0}},\tilde{l}\left( z \right))}{d{{W}_{{{U}_{Y}}\left( {{\theta }_{Y}} \right)}}}\left( {{\left( \delta _{{{U}_{l}}\left( {{\theta }_{l}} \right),{{U}_{i}}\left( {{\theta }_{i}} \right)}^{Y} \right)}^{2}}{{W}_{{{U}_{m}}\left( {{\theta }_{m}} \right)}}{{W}_{{{U}_{j}}\left( {{\theta }_{j}} \right)}}+{{f}_{i\angle m}}\left( \delta _{{{U}_{l}}\left( {{\theta }_{l}} \right)}^{Y}\delta _{{{U}_{i}}\left( {{\theta }_{i}} \right)}^{m}{{W}_{{{U}_{j}}\left( {{\theta }_{j}} \right)}} \right) \right)},  
\end{split} 
\end{equation} 
where $W_{U_{i} \left(\theta _{i} \right)}$ is given in \eqref{c}, $f_{i\angle m} \left(\cdot \right)$ is a topological ordering function on ${\rm {\mathcal G}}_{{\rm RQNN}_{QG} } $, indices $Y$ and $Q$ are associated with the output realizations ${\left| Y \right\rangle} $ and ${\left| Q \right\rangle} $, while $\left(\delta _{U_{l} \left(\theta _{l} \right),U_{i} \left(\theta _{i} \right)}^{Q} \right)^{2} $ is the square error between unitaries $U_{l} \left(\theta _{l} \right)$ and $U_{i} \left(\theta _{i} \right)$ at a particular output ${\left| Q \right\rangle} $ as
\begin{equation} \label{ZEqnNum391405} 
\begin{split}
   {{\left( \delta _{{{U}_{l}}\left( {{\theta }_{l}} \right),{{U}_{i}}\left( {{\theta }_{i}} \right)}^{Q} \right)}^{2}}&=\frac{{{d}^{2}}{{W}_{{{U}_{Y}}\left( {{\theta }_{Y}} \right)}}}{d{{Q}_{{{U}_{l}}\left( {{\theta }_{l}} \right)}}d{{Q}_{{{U}_{i}}\left( {{\theta }_{i}} \right)}}}=\frac{d\delta _{{{U}_{i}}\left( {{\theta }_{i}} \right)}^{Y}}{d{{Q}_{{{U}_{l}}\left( {{\theta }_{l}} \right)}}} \\ 
 & =\frac{d\delta _{{{U}_{l}}\left( {{\theta }_{l}} \right)}^{i}}{d{{Q}_{{{U}_{i}}\left( {{\theta }_{i}} \right)}}}\sum\limits_{j\in \Gamma \left( i \right)}{{{\theta }_{ji}}\delta _{{{U}_{l}}\left( {{\theta }_{l}} \right)}^{Y}+{{Q}_{{{U}_{i}}\left( {{\theta }_{i}} \right)}}}\sum\limits_{j\in \Gamma \left( i \right)}{{{\theta }_{ji}}{{\left( \delta _{{{U}_{j}}\left( {{\theta }_{l}} \right),{{U}_{l}}\left( {{\theta }_{l}} \right)}^{Y} \right)}^{2}}},
\end{split}
\end{equation} 
where $Q_{U_{i} \left(\theta _{i} \right)}$ is as in \eqref{qi}.
Note that the relation $\left(\delta _{U_{l} \left(\theta _{l} \right),U_{i} \left(\theta _{i} \right)}^{Q} \right)^{2} \ne 0$ in \eqref{ZEqnNum391405} holds if only there is an edge $s_{il} $ between $v_{U_{i}} \in V$ and $v_{U_{l} \left(\theta _{l} \right)} \in V$ in the environmental graph ${\rm {\mathcal G}}_{{\rm RQNN}_{QG} } $ of ${\rm RQNN}_{QG} $. Thus,
\begin{equation} \label{100)} 
\left(\delta _{U_{l} \left(\theta _{l} \right),U_{i} \left(\theta _{i} \right)}^{Q} \right)^{2} =\left\{\begin{array}{l} {\left(\delta _{U_{l} \left(\theta _{l} \right),U_{i} \left(\theta _{i} \right)}^{Q} \right)^{2} =0,{\rm \; if\; }s_{il} \notin S} \\ {\left(\delta _{U_{l} \left(\theta _{l} \right),U_{i} \left(\theta _{i} \right)}^{Q} \right)^{2} \ne 0,{\rm \; if\; }s_{il} \in S} \end{array}\right. . 
\end{equation} 
Since ${\rm {\mathcal G}}_{{\rm RQNN}_{QG} } $ contains all information for the computation of \eqref{ZEqnNum391405} and ${\rm {\mathcal D}}\left({\rm RQNN}_{QG} \right)$ is defined through the structure of ${\rm {\mathcal G}}_{{\rm RQNN}_{QG} } $, the proof is concluded here.
\end{proof}

\section{Conclusions}
\label{sec6}
Gate-model QNNs allow an experimental implementation on near-term gate-model quantum computer architectures. Here we examined the problem of learning optimization of gate-model QNNs. We defined the constraint-based computational models of these quantum networks and proved the optimal learning methods. We revealed that the computational models are different for nonrecurrent and recurrent gate-model quantum networks. We proved that for nonrecurrent and recurrent gate-model QNNs, the optimal learning is a supervised learning. We showed that for a recurrent gate-model QNN, the learning can be reduced to backpropagation. The results are particularly useful for the training of QNNs on near-term quantum computers.

%

\section*{Acknowledgements}
The research reported in this paper has been supported by the National Research, Development and Innovation Fund (TUDFO/51757/2019-ITM, Thematic Excellence Program). This work was partially supported by the National Research Development and Innovation Office of Hungary (Project No. 2017-1.2.1-NKP-2017-00001), by the Hungarian Scientific Research Fund - OTKA K-112125 and in part by the BME Artificial Intelligence FIKP grant of EMMI (BME FIKP-MI/SC).

\newpage
\appendix
\setcounter{table}{0}
\setcounter{figure}{0}
\setcounter{equation}{0}
\setcounter{algocf}{0}
\renewcommand{\thetable}{\Alph{section}.\arabic{table}}
\renewcommand{\thefigure}{\Alph{section}.\arabic{figure}}
\renewcommand{\theequation}{\Alph{section}.\arabic{equation}}
\renewcommand{\thealgocf}{\Alph{section}.\arabic{algocf}}

\setlength{\arrayrulewidth}{0.1mm}
\setlength{\tabcolsep}{5pt}
\renewcommand{\arraystretch}{1.5}
\section{Appendix}
\subsection{Abbreviations}
\begin{description}
\item[AI] Artificial Intelligence
\item[DAG] Directed Acyclic Graph
\item[QG] Quantum Gate structure of a gate-model quantum computer
\item[QNN] Quantum Neural Network
\item[RQNN] Recurrent Quantum Neural Network

\end{description}

\subsection{Notations}
The notations of the manuscript are summarized in  \tref{tab2}.
\begin{center}
\begin{longtable}{||l|p{4.5in}||}
\caption{Summary of notations.}
\label{tab2}
\endfirsthead
\endhead
\hline
\textit{Notation} & \textit{Description} \\ \hline
${\rm QNN}_{QG} $ & Quantum neural network implemented on a gate-model quantum computer with a quantum gate structure $QG$. \\ \hline 
${\rm RQNN}_{QG} $ & Recurrent quantum neural network implemented on a gate-model quantum computer with a quantum gate structure $QG$. \\ \hline 
$U_{i} \left(\theta _{i} \right)$ & An $i$-th unitary gate, $U_{i} \left(\theta _{i} \right)=\exp \left(-i\theta _{i} P\right)$, where $P$ is a generalized Pauli operator formulated by a tensor product of Pauli operators $\left\{X,Y,Z\right\}$, while $\theta _{i} $ is referred to as the gate parameter associated to $U_{i} \left(\theta _{i} \right)$. \\ \hline 
$U_{j}(\theta _{ij})$ & Selection of $\theta _{j}$ for the unitary $U_j$ to realize the operation $U_i\left( {{\theta }_{i}} \right)U_j\left( {{\theta }_{j}} \right)$, i.e., the application of $U_j\left( {{\theta }_{j}} \right)$ on the output of $U_i\left( {{\theta }_{i}} \right)$ at a particular gate parameter $\theta _{j}$.\\ \hline 
$U(\vec{\theta })$ & Unitary operator, $U(\vec{\theta })=U_{L} \left(\theta _{L} \right)U_{L-1} \left(\theta _{L-1} \right)\ldots U_{1} \left(\theta _{1} \right)$, where $U_{i} \left(\theta _{i} \right)$ identifies an $i$-th unitary gate. \\ \hline 
$\vec{\theta }$ & A collection of gate parameters of the $L$ unitaries, $\vec{\theta }=\theta _{L} ,\theta _{L-1} ,\ldots ,\theta _{1} $. \\ \hline 
${\left| \psi ,\varphi  \right\rangle} $ & Input system, where ${\left| \psi  \right\rangle} ={\left| z \right\rangle} $ is a computational basis state, where $z$ is an $n$-length string, while the $\left(n+1\right)$-th quantum state initialized as ${\left| \varphi  \right\rangle} ={\left| 1 \right\rangle} $, and is referred to as the readout quantum state.   \\ \hline 
${\left| Y \right\rangle} $ & An $\left(n+1\right)$-length output quantum system of the gate-model quantum neural network. \\ \hline 
$z$ & An $n$-length string, $z=z_{1} z_{2} \ldots z_{n} ,$ where $z_{i} $ represents a classical bit, $z_{i} \in \left\{-1,1\right\}$. \\ \hline 
$f\left(\theta \right)$ & Objective function. \\ \hline 
$l\left(z\right)$ & Binary label of string $z$,  $l\left(z\right)\in \left\{-1,1\right\}$. \\ \hline 
$\tilde{l}$ & Predicted value of the binary label $l\left(z\right)\in \left\{-1,1\right\}$ of string $z$,  $\tilde{l}(z)=\langle  z,1 | {{( U( {\vec{\theta }} ) )}^{\dagger }}{{Y}_{n+1}}U( {\vec{\theta }} )|z,1 \rangle $. \\ \hline 
$\Delta (\tilde{l}(z))$ & Difference of the $\tilde{l}\left(z\right)$ predicted value of the binary label $l\left(z\right)\in \left\{-1,1\right\}$ of the input string $z$, defined as $\Delta (\tilde{l}(z))=|l(z)-\tilde{l}(z)|$, where $\tilde{l}\in [-1,1]$. \\ \hline 
$Y_{n+1} $ & Measured Pauli operator on the ${\left| \varphi  \right\rangle} $ readout quantum state, $Y_{n+1} \in \left\{-1,1\right\}$. \\ \hline 
${\left| Y \right\rangle} ^{\left(r\right)} $ & An output system realization, $r=1,\ldots ,R$, where $R$ is the total number of output instances. \\ \hline 
${\rm {\mathcal S}}_{T} $ & Training set, formulated via $N$ input strings and labels, ${\rm {\mathcal S}}_{T} =\left\{z_{i} ,l\left(z_{i} \right),i=1,\ldots ,N\right\}$. \\ \hline 
${\rm {\mathcal C}}$ & Constraint machine. \\ \hline 
${\rm {\mathcal D}}$ & Diffusion machine. \\ \hline 
${\rm {\mathcal F}}$ & Functional space. \\ \hline 
${\rm {\mathcal G}}$ & Environmental graph, ${\rm {\mathcal G}}=\left(V,S\right)$. A directed acyclic graph (DAG), with a set $V$ of vertexes, and a set $S$ of arcs. \\ \hline 
$V$ & Set of vertexes in the ${\rm {\mathcal G}}$ environmental graph. \\ \hline 
$S$ & Set of arcs in the ${\rm {\mathcal G}}$ environmental graph. \\ \hline 
$v$ & A vertex of  $V$ the ${\rm {\mathcal G}}$ environmental graph. \\ \hline 
$\Gamma \left(v\right)$ & Children set of $v$ the ${\rm {\mathcal G}}$ environmental graph. \\ \hline 
$\left|\Gamma \left(v\right)\right|$ & Cardinality of set $\Gamma \left(v\right)$. \\ \hline 
$\left\langle x\right\rangle $ & An identifier. \\ \hline 
${\rm {\mathcal X}}$ & Perceptual space. \\ \hline 
${\rm {\mathcal Z}}$ & Mapped space. \\ \hline 
$x$ & An element (vector) of the perceptual space ${\rm {\mathcal X}}\subset {\rm {\mathbb{C}}}^{d} $. \\ \hline 
$\diamondsuit $ & Symbol of missing features. \\ \hline 
${\rm {\mathcal X}}_{0} $ & Initial perceptual space, ${\rm {\mathcal X}}_{0} ={\rm {\mathcal X}}\bigcup \left\{\diamondsuit \right\}$. \\ \hline 
${\rm {\mathcal I}}$ & Individual space, ${\rm {\mathcal I}}=V\times {\rm {\mathcal X}}_{0} $. \\ \hline 
$f_{{\rm {\mathcal P}}} $ & A perceptual map, $f_{{\rm {\mathcal P}}} :\tilde{V}\to {\rm {\mathcal X}}:x=f_{{\rm {\mathcal P}}} \left(v\right)$, where $\tilde{V}$ is a subset  $V$ in the ${\rm {\mathcal G}}$ environmental graph. \\ \hline 
$\iota $ & An individual of the individual space ${\rm {\mathcal I}}$, $\iota =\Upsilon x+\neg \Upsilon v$, where $+$ is the sum operator in ${\rm {\mathbb{C}}}^{d} $, while $\Upsilon $ is a constraint as\newline $\Upsilon :(v\in \tilde{V})\vee \left(x\in {\rm {\mathcal X}}\backslash {\rm {\mathcal X}}_{0} \right)$.                                                     \\ \hline 
$C_{\iota ^{*} } $, $C_{\iota } $ & Constraints. \\ \hline 
$\chi \left(\cdot \right)$ & Compact constraint. \\ \hline 
${\rm {\mathcal S}}_{in} $ & Input space. \\ \hline 
${\rm {\mathcal U}}$ & Space of unitaries. \\ \hline 
${\rm {\mathcal Y}}$ & Output space. \\ \hline 
${\rm {\mathcal G}}_{{\rm QNN}_{QG} } $ & Environmental graph of a ${\rm QNN}_{QG} $.  \\ \hline 
${\rm {\mathcal G}}_{{\rm RQNN}_{QG} } $ & Environmental graph of an ${\rm RQNN}_{QG} $. \\ \hline 
$v_{U_{i}} $ & A vertex associated to the unitary $U_{i} \left(\theta _{i} \right)$ in the environmental graph. \\ \hline 
$v_{0} $ & A vertex associated to the input in the environmental graph. \\ \hline 
$\theta _{ij} $ & Gate parameter, associated to the directed arch $s_{ij} $ between $v_{U_{i}} $ and $v_{U_{j}} $. \\ \hline 
$x_{U_{i} \left(\theta _{i} \right)} $ & An element of ${\rm {\mathcal X}}$ associated to unitary $U_{i} \left(\theta _{i} \right)$. \\ \hline 
$x_{0} $ & An element of ${\rm {\mathcal X}}$ associated to the input, $x_{0} ={\left| z,1 \right\rangle} $. \\ \hline 
$a_{U_{i} \left(\theta _{i} \right)} $ & Parameter defined for a $U_{i} \left(\theta _{i} \right)$ as $a_{U_{i} \left(\theta _{i} \right)} =\sum _{h\in \Xi \left(i\right)}U_{h} \left(\theta _{h} \right)x_{U_{h} \left(\theta _{h} \right)} +b_{U_{i} \left(\theta _{i} \right)}  ,$ where $\Xi \left(i\right)$ refers to the parent set of $v_{U_{i}} $, $b_{U_{i} \left(\theta _{i} \right)} $ is the bias relative to $v_{U_{i}} $. \\ \hline 
$f_{\angle } \left(\cdot \right)$ & Topological ordering function on the environmental graph. \\ \hline 
$\mathbf{H}$ & Hessian matrix. \\ \hline 
$f_{T} \left(\cdot \right)$ & Transition function. \\ \hline 
$F_{O} \left(\cdot \right)$ & Output function. \\ \hline 
$\gamma $ & State variable in the mapped space ${\rm {\mathcal Z}}$, $\gamma \in {\rm {\mathcal Z}}$. \\ \hline 
$\gamma _{v} $ & State variable associated to $v$, $\gamma _{v} \in {\rm {\mathcal Z}}$. \\ \hline 
$\phi \left(A\right)$ & Associated function-pair, $\phi \left(A\right)=\left(f_{T} ,F_{O} \right)$. \\ \hline 
${\left| \gamma _{v}  \right\rangle} $ & System state associated to state variable $\gamma _{v} $. \\ \hline 
$\zeta _{v} $ & Constraint on $f_{T} \left({\rm QNN}_{QG} \right)$ for a ${\rm QNN}_{QG} $. \\ \hline 
$\wp _{v} $ & Constraint on $F_{O} \left({\rm QNN}_{QG} \right)$ for a ${\rm QNN}_{QG} $. \\ \hline 
$\circ $ & Composition operator, $\left(f\circ g\right)\left(x\right)=f\left(g\left(x\right)\right)$. \\ \hline 
$\alpha $ & Parameter. \\ \hline 
$\pi _{v} $ & Compact constraint on $f_{T} \left({\rm QNN}_{QG} \right)$ and $F_{O} \left({\rm QNN}_{QG} \right)$. \\ \hline 
$\Lambda _{v} $ & Constraint on $f_{T} \left({\rm RQNN}_{QG} \right)$ of ${\rm RQNN}_{QG} $. \\ \hline 
$\Omega _{v} $ & Constraint on $F_{O} \left({\rm RQNN}_{QG} \right)$ of ${\rm RQNN}_{QG} $. \\ \hline 
${\mathchar'26\mkern-10mu\lambda} \left(f\left(x\right)\right)$ & Diffuse constraint for ${\rm RQNN}_{QG} $. \\ \hline 
$H_{t} $ & Unit vector for a unitary $U_{t} \left(\theta _{t} \right)$, $t=1,\ldots ,L-1$, $H_{t} =x_{t} +iy_{t} $, where $x_{t} $ and $y_{t} $ are real values. \\ \hline 
$Z_{t+1} $ & System state, $Z_{t+1} =U(\vec{\theta })H_{t} +Ex_{t+1} $, where $E$ is a basis vector matrix. \\ \hline 
$f_{\sigma }^{{\rm RQNN}_{QG} } \left(\cdot \right)$ & Function for ${\rm RQNN}_{QG} $. \\ \hline 
$\left|\right|_{1} $ & $L1$-norm. \\ \hline 
$W$ & An output matrix. \\ \hline 
$D$ & Jacobian matrix. \\ \hline 
$A$ & Constraint matrix. \\ \hline 
$b\left(x\right)$ & Smooth vector-valued function with compact support. \\ \hline 
$f^{*} $ & Compact function subject to be determined. \\ \hline 
${{{\mathcal{S}}_{L\left( \text{QNN} \right)}}}$ & Non-empty supervised learning set. \\ \hline 
$\ell $ & Differential operator, $\ell ={{P}^{\dagger }}P$, where ${{P}^{\dagger }}$ is the adjoint of $P$. \\ \hline 
${{\nabla }^{2}}$ & Laplacian operator. \\ \hline 
${\rm {\mathcal G}}\left(\cdot \right)$ & Green function. \\ \hline 
 $\mathcal{L}$ & Lagrangian. \\ \hline
$\lambda \left(x\right)$ & Lagrange multiplier. \\ \hline 
$H\left(x\right)$, $\Phi $, $\chi _{\kappa } $ & Parameters used in the calculation of compact function $f^{*} \left(x\right)$. \\ \hline 
${\rm {\mathcal L}}(x_{0} ,\tilde{l}(z))$ & Loss function.  \\ \hline 
${\rm {\mathcal A}}_{{\rm {\mathcal C}}\left({\rm QNN}_{QG} \right)} $ & Learning method for ${\rm {\mathcal C}}\left({\rm QNN}_{QG} \right)$. \\ \hline 
${\rm {\mathcal A}}_{{\rm {\mathcal D}}\left({\rm RQNN}_{QG} \right)} $ & Learning method for ${\rm {\mathcal D}}\left({\rm RQNN}_{QG} \right)$. \\ \hline 
$\vec{T}_{{\rm {\mathcal G}}_{{\rm QNN}_{QG} } } $ & Topologically sorted node set, $\vec{T}_{{\rm {\mathcal G}}_{{\rm QNN}_{QG} } } =\left(q_{1} ,\ldots ,q_{L} \right)$. \\ \hline 
$\Xi \left(k\right)$ & Parents of $k\in V$ in the environmental graph. \\ \hline 
${\rm P}^{\left(r\right)} ({\rm {\mathcal G}}_{{\rm QNN}_{QG} })$ & Post-processing associated to the $r$-th measurement on ${\rm {\mathcal G}}_{{\rm QNN}_{QG} } $. \\ \hline 
$\delta _{U_{i} \left(\theta _{i} \right)} $ & Error associated to unitary $U_{i} \left(\theta _{i} \right)$ in the environmental graph. \\ \hline 
$\nu _{U_{L} \left(\theta _{L} \right)} $ & Vertex associated to $U_{L} \left(\theta _{L} \right)$ in the environmental graph. \\ \hline 
$\vec{\Delta }\theta _{i} $ & Parameter modification. \\ \hline 
$g_{U_{i} \left(\theta _{i} \right),U_{j} \left(\theta _{j} \right)} $ & Gradient between unitaries. \\ \hline 
$S_{i} $ & Structure from the environmental graph. \\ \hline 
$\vec{\delta }_{S_{i} } $ & Error vector associated to structure $S_{i} $. \\ \hline 
$\eta $ & Learning parameter. \\ \hline 
$S'$ & Children structure. \\ \hline 
$I$ & Identity operation. \\ \hline 
${\left| Q \right\rangle} $ & An output realization of ${\rm RQNN}_{QG} $. \\ \hline 
$\mathbf{H}_{{\rm RQNN}_{QG} } $ & Hessian matrix of the ${\rm RQNN}_{QG} $ structure. \\ \hline 
$\hbar _{ij,lm}^{{\rm RQNN}_{QG} } $ & A generic coordinate of the Hessian matrix $\mathbf{H}_{{\rm RQNN}_{QG} } $. \\ \hline 
$(\delta _{U_{l} (\theta _{l} ),U_{i} (\theta _{i} )}^{Q} )^{2} $ & Square error between unitaries  $U_{l} \left(\theta _{l} \right)$ and $U_{i} \left(\theta _{i} \right)$ at a particular output ${| Q \rangle} $ of ${\rm RQNN}_{QG} $. \\ \hline
 \end{longtable}
\end{center}
\end{document}